\begin{document}

\title{A Class of Three-Weight Linear Codes and Their Complete Weight Enumerators
}

\titlerunning{A Class of Three-Weight Linear Codes}        

\author{Shudi Yang      \and
        Zheng-An Yao   \and~
        Chang-An Zhao
        }


\institute{S.D. Yang \at
              Department of Mathematics,
Sun Yat-sen University, Guangzhou 510275 and School of Mathematical
Sciences, Qufu Normal University, Shandong 273165, P.R.China \\
              \email{yangshd3@mail2.sysu.edu.cn}           
              \and
           Z.-A. Yao \at
               Department of Mathematics,
Sun Yat-sen University, Guangzhou 510275, P.R. China\\
              \email{mcsyao@mail.sysu.edu.cn}
              \and
               C.-A. Zhao \at
               Department of Mathematics,
Sun Yat-sen University, Guangzhou 510275, P.R. China\\
              \email{zhaochan3@mail.sysu.edu.cn}  }
            \date{Received: date / Accepted: date}

\maketitle

\begin{abstract}
Recently, linear codes constructed from defining sets have been investigated extensively and they have many applications. In this paper, for an odd prime $p$, we propose a class of $p$-ary linear codes by choosing a proper defining set. Their weight enumerators and complete weight enumerators are presented explicitly. The results show that they are linear codes with three weights and suitable for the constructions of authentication codes and secret sharing schemes.

\keywords{Linear code \and Complete weight enumerator \and Weight enumerator \and Gauss sum \and Gauss period
\\
 }
\subclass{94B15 \and 11T71 }
\end{abstract}


\section{Introduction}\label{sec:intro}

Throughout this paper, let $p$ be an odd prime and $r=p^m$ for an integer $m\geq2$. Denote by $\mathbb{F}_r$ a
finite field with $r$ elements. An $[n, \kappa, \delta]$ linear code
$C$ over $\mathbb{F}_p$ is a $\kappa$-dimensional subspace of
$\mathbb{F}_p^n$ with minimum distance $\delta$
(see~\cite{ding2014differencesets,macwilliams1977theory}).

Let $A_i$ denote the number of codewords with Hamming weight
$i$ in a linear code $C$ of length $n$. The weight enumerator of $C$ is defined by
$A_0+A_1z+A_2z^2+\cdots+A_nz^n,$
where $A_0=1$. The sequence $(1,A_1,A_2,\cdots,A_n)$ is called the weight
distribution of the code $C$.

The complete weight enumerator of a code $C$ over $\mathbb{F}_p$ enumerates the codewords according to the number of symbols of each kind contained in each codeword. We recall the
 definition as that of~\cite{Blake1991}. Denote elements of the field by $\mathbb{F}_p=\{w_0,w_1,\cdots,w_{p-1}\}$, where $w_0=0$.
Also, let $\mathbb{F}_p^*$ denote $\mathbb{F}_p\backslash\{0\}$.
For a codeword $\mathsf{c}=(c_0,c_1,\cdots,c_{n-1})\in \mathbb{F}_p^n$, let $w[\mathsf{c}]$ be the
complete weight enumerator of $\mathsf{c}$, which is defined as
$$w[\mathsf{c}]=w_0^{k_0}w_1^{k_1}\cdots w_{p-1}^{k_{p-1}},$$
where $k_j$ is the number of components of $\mathsf{c}$ equal to $w_j$, $\sum_{j=0}^{p-1}k_j=n$.
The complete weight enumerator of the code $C$ is then
$$\mathrm{CWE}(C)=\sum_{\mathsf{c}\in C}w[\mathsf{c}].$$

The weight distributions of linear codes have been well studied in literature (see~\cite{ding2013hamming,dinh2015recent,feng2008weight,luo2008weight,sharma2012weight,vega2012weight,wang2012weight,yu2014weight,yuan2006weight,zheng2015weightseveral,Zhou2013fiveweight} and references therein). The information of the complete weight enumerators of linear codes is of vital use because they not only give the weight enumerators but also show the frequency of each symbol appearing in each codeword. Therefore, they have many applications. Blake and Kith investigated the complete weight enumerator of Reed-Solomon codes and showed that they could be helpful in soft decision decoding~\cite{Blake1991,kith1989complete}. In~\cite{helleseth2006}, the study of the monomial and quadratic bent functions was related to the complete weight enumerators of linear codes. It was illustrated by Ding $et~al.$~\cite{ding2007generic,Ding2005auth} that complete weight enumerators can be applied to the calculation of the deception probabilities of certain authentication codes. In~\cite{chu2006constantco,ding2008optimal,ding2006construction}, the authors studied the complete weight enumerators of some constant
composition codes and presented some families of optimal constant composition codes.

However, it is extremely difficult to evaluate the complete
weight enumerators of linear codes in general and there is little information on this topic in literature besides the above mentioned~\cite{Blake1991,chu2006constantco,ding2008optimal,ding2006construction,kith1989complete}.
Kuzmin and Nechaev investigated the
generalized Kerdock code and related linear codes over Galois rings and determined their complete weight enumerators in~\cite{kuzmin1999complete} and~\cite{kuzmin2001complete}. More recent progress on the complete weight enumerators of linear codes can be found in~\cite{AhnKaLi2016completegenelize,BaeLiYue2015complete,LiYang2015cwe,li2015complete,WangQiuyan2015complet}.
The results of~\cite{AhnKaLi2016completegenelize} and~\cite{BaeLiYue2015complete} can be
viewed as generalizations of~\cite{yang2015complete} and~\cite{ding2015twothree}, respectively.
In~\cite{LiYang2015cwe,li2015complete,WangQiuyan2015complet}, the authors treated the complete weight enumerators of some linear or cyclic codes by using exponential sums and Galois theory. It should be mentioned that Tang $et~al.$~\cite{tang2015linear} constructed linear codes with two or three weights from weakly regular bent functions. We shall generalize this construction to non-bent functions.

The authors of~\cite{ding2015twodesign,dingkelan2014binary,ding2015twothree} gave the generic construction of linear codes. Set $\bar{D}=\{d_1,d_2,\cdots,d_n\}\subseteq \mathbb{F}_{r}$, where $r=p^m$. Denote by $\mathrm{Tr}$ the absolute trace function. A linear code associated with $\bar{D}$ is defined by
\begin{equation*}\label{def:CD'}
    C_{\bar{D}}=\{(\mathrm{Tr}(ad_1),\mathrm{Tr}(ad_2),\cdots,\mathrm{Tr}(ad_n)):
       a\in \mathbb{F}_{r}\}.
\end{equation*}
Then $\bar{D}$ is called the defining set of this code $C_{\bar{D}}$.

Motivated by the above construction and the idea of~\cite{tang2015linear}, we define linear codes
$C_{D}$ and $C_{D_1}$ by
\begin{eqnarray}\label{def:CD1-1}
    &C_{D}&=\{(\mathrm{Tr}(ax^2))_{x\in D}: a\in \mathbb{F}_{r}\},\\
    &C_{D_1}&=\{(\mathrm{Tr}(ax^2))_{x\in D_1}: a\in \mathbb{F}_{r}\},\nonumber
\end{eqnarray}
where
\begin{eqnarray*}
     &D   &=\{x\in \mathbb{F}_{r}^*:\mathrm{Tr}(x)\in Sq\},\\
     &D_1&=\{x\in \mathbb{F}_{r}^*:\mathrm{Tr}(x)\in Nsq\},\nonumber
\end{eqnarray*}
are also called defining sets. Here $Sq$ and $Nsq$ denote the set of all square elements and non-square elements in $\mathbb{F}_{p}^*$, respectively. By definition, these codes have length $n=(p-1)p^{m-1}/2$ and dimension at most $m$. Further, we will demonstrate that $C_D$ is equal to $C_{D_1}$. Actually, for a fixed $b\in Nsq$, there exists a mapping $\phi_b$ such that
\begin{eqnarray*}
     \phi_b : &D &\rightarrow D_1\\
              &x& \mapsto bx
\end{eqnarray*}
which implies that $\mathrm{Tr}(a(\phi_b(x))^2)=\mathrm{Tr}(ab^2x^2)$
for all $x\in D$ and $a\in \mathbb{F}_{r}$.
As $a$ runs through $\mathbb{F}_{r}$, so does $ab^2$. This means they have the same codewords. Hence, we only describe all the information of $C_{D}$. In this paper, the complete weight enumerator of $C_{D}$ is investigated by employing
exponential sums and Gauss periods. This gives its weight enumerator immediately. As it turns out, this code is a three-weight linear code which will be of special interest in authentication codes~\cite{Ding2005auth} and
secret sharing schemes~\cite{carlet2005linear}.

The remainder of this paper is organized as follows. In Section
\ref{sec:main results}, we describe the main results of this paper, additionally we give some examples. Section~\ref{sec:pf} briefly recalls some definitions and results on Gauss periods and Gauss sums, then proves the main results.
Finally, Section~\ref{sec:conclusion} is devoted to conclusions.

\section{Main results}\label{sec:main results}

 In this section, we only introduce the complete weight enumerator and weight enumerator of $C_{D}$ described in~\eqref{def:CD1-1}. The main results of this paper are presented below, whose proofs will be given in Section~\ref{sec:pf}.

First of all, we establish the complete weight enumerator of $C_{D}$ in the following three theorems, after which, we give some examples to illustrate these results.

\begin{theorem}\label{thcwe:CD1,-1}
Let $p\equiv3\mod 4$ and $\rho, z$ be elements in $\mathbb{F}_{p}$. Then the code $C_{D}$ defined by~\eqref{def:CD1-1} is a $[\frac{p-1}{2}p^{m-1},m]$ three-weight linear code and we have the following assertions.\\
$(i)$ If $m$ is even, then the
complete weight enumerator of $C_D$ is given by
\begin{eqnarray*}
&&w_0^{\frac{p-1}{2}p^{m-1}}+(p^{m-1}-1)\prod_{\rho\in \mathbb{F}_{p}}w_{\rho}^{\frac{p-1}{2}p^{m-2}}\\
&&+\frac{p-1}{4}(p^{m-1}\!+\!p^{\frac{m-2}{2}})w_0^{\frac{p-1}{2}(p^{m-2}-p^{\frac{m-2}{2}})}
\sum_{i\in \{1,-1\}}\prod_{\left(\frac{\rho}{p}\right)=i}w_{\rho}^{\frac{p-1}{2}p^{m-2}}
\prod_{\left(\frac{z}{p}\right)=-i}w_{z}^{A_1}\\
&&+\frac{p-1}{4}(p^{m-1}\!-\!p^{\frac{m-2}{2}})w_0^{\frac{p-1}{2}(p^{m-2}+p^{\frac{m-2}{2}})}
\sum_{i \in \{1,-1\}}\prod_{\left(\frac{\rho}{p}\right)=i}w_{\rho}^{\frac{p-1}{2}p^{m-2}}
\prod_{\left(\frac{z}{p}\right)=-i}w_{z}^{A_{-1}},
\end{eqnarray*}
where, for $\varepsilon \in \{1,-1\}$,
\begin{eqnarray*}
A_\varepsilon=\frac{p-1}{2}p^{m-2}+\varepsilon p^{\frac{m-2}{2}}.
\end{eqnarray*}\\
$(ii)$ If $m$ is odd, then the
complete weight enumerator of $C_D$ is given by
\begin{eqnarray*}
&&w_0^{\frac{p-1}{2}p^{m-1}}+(p^{m-1}-1)\prod_{\rho\in \mathbb{F}_{p}}w_{\rho}^{\frac{p-1}{2}p^{m-2}}\\
&&+\frac{p-1}{4}(p^{m-1}\!+\!p^{\frac{m-1}{2}})w_0^{\frac{p-1}{2}(p^{m-2}-p^{\frac{m-3}{2}})}
\sum_{i\in\{1,-1\}}\prod_{\left(\frac{\rho}{p}\right)=i}w_{\rho}^{A_1}
\prod_{\left(\frac{z}{p}\right)=-i}w_{z}^{B_1}\\
&&+\frac{p-1}{4}(p^{m-1}\!-\!p^{\frac{m-1}{2}})w_0^{\frac{p-1}{2}(p^{m-2}+p^{\frac{m-3}{2}})}
\sum_{i\in\{1,-1\}}\prod_{\left(\frac{\rho}{p}\right)=i}w_{\rho}^{A_{-1}}
\prod_{\left(\frac{z}{p}\right)=-i}w_{z}^{B_{-1}},
\end{eqnarray*}
where, for $\varepsilon \in \{1,-1\}$,
\begin{eqnarray*}
A_\varepsilon &=&\frac{p-1}{2}(p^{m-2}-\varepsilon p^{\frac{m-3}{2}}),\\
B_\varepsilon &=& \frac{p-1}{2}p^{m-2}+\varepsilon\frac{p+1}{2}p^{\frac{m-3}{2}}.
\end{eqnarray*}
\end{theorem}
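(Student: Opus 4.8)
The plan is to compute, for each codeword $\mathsf{c}_a=(\mathrm{Tr}(ax^2))_{x\in D}$ with $a\in\mathbb{F}_r$, the number of coordinates equal to a given $\rho\in\mathbb{F}_p$, namely
$$N_\rho(a)=\#\{x\in\mathbb{F}_r^*:\mathrm{Tr}(x)\in Sq,\ \mathrm{Tr}(ax^2)=\rho\}.$$
I would express $N_\rho(a)$ through additive characters: writing $\chi$ for the canonical additive character of $\mathbb{F}_r$ and $\chi'$ for that of $\mathbb{F}_p$, use the orthogonality relation $\frac{1}{p}\sum_{y\in\mathbb{F}_p}\chi'(y(\mathrm{Tr}(ax^2)-\rho))$ to detect $\mathrm{Tr}(ax^2)=\rho$, and similarly detect $\mathrm{Tr}(x)\in Sq$ by summing the quadratic residue indicator $\frac{1}{2}\bigl(1+\eta'(\mathrm{Tr}(x))\bigr)$ where $\eta'$ is the quadratic character of $\mathbb{F}_p$ (handling the $\mathrm{Tr}(x)=0$ contribution separately). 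This converts $N_\rho(a)$ into a main term plus a combination of Weil-type sums $\sum_{x\in\mathbb{F}_r}\chi(uax^2+vx)$ and twisted sums involving $\eta'(\mathrm{Tr}(x))$. The quadratic Gauss sum over $\mathbb{F}_r$ evaluates these cleanly after completing the square, introducing the quadratic character $\eta$ of $\mathbb{F}_r$ and the Gauss sum $G=\sum_{x}\eta(x)\chi(x)$, whose explicit value (depending on $p\bmod 4$ and the parity of $m$) is recalled in Section~\ref{sec:pf}.

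The key reduction is that, since $p\equiv 3\bmod 4$, one has $\eta'(-1)=-1$ and the relation $\eta(y)=\eta'(y)^m$ for $y\in\mathbb{F}_p^*$, so the inner characters collapse and the whole expression for $N_\rho(a)$ depends on $a$ only through a single Gauss period attached to the coset of $a$ among squares/non-squares, together with the value $\eta'(\rho)$. Concretely I expect to obtain that $N_\rho(a)$ takes one of a small number of values according to which of three classes $a$ falls in — roughly $a=0$, $a\neq 0$ with one sign of a certain character value, and the other sign — and that within each class the dependence on $\rho$ is governed entirely by $\left(\frac{\rho}{p}\right)$. Counting how many $a\in\mathbb{F}_r$ lie in each class is then a standard Gauss-period cardinality computation (the sets have sizes like $p^{m-1}-1$ and $\frac{p-1}{4}(p^{m-1}\pm p^{\lfloor (m-1)/2\rfloor})$), and multiplying each class by its monomial $\prod_\rho w_\rho^{N_\rho(a)}$ yields the stated $\mathrm{CWE}$, with the even/odd $m$ dichotomy coming directly from the two formulas for $G$. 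The exponents $A_\varepsilon,B_\varepsilon$ are exactly the values of $N_\rho(a)$ produced by these two Gauss-sum regimes.

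The main obstacle, and the place requiring genuine care rather than bookkeeping, is the double-character sum
$$\sum_{x\in\mathbb{F}_r}\eta'(\mathrm{Tr}(x))\,\chi(uax^2+vx),$$
where a character of the subfield $\mathbb{F}_p$ is composed with the trace and multiplied against a quadratic phase on $\mathbb{F}_r$. The trick I would use is to lift $\eta'\circ\mathrm{Tr}$ to an additive-character average over $\mathbb{F}_p^*$ via the Gauss sum $g=\sum_{t\in\mathbb{F}_p^*}\eta'(t)\chi'(t)$, so that $\eta'(\mathrm{Tr}(x))=g^{-1}\sum_{t\in\mathbb{F}_p^*}\eta'(t)\chi(tx)$; substituting this in merges the extra factor into the linear term $vx$ and leaves a pure quadratic Gauss sum over $\mathbb{F}_r$ for each $t$, after which the $t$-sum is again a one-variable character sum over $\mathbb{F}_p$ that evaluates in terms of $\eta'$, $g$, and $G$. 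Carrying out this evaluation correctly — in particular tracking the factors $\eta'(-1)$, $\eta(a)$, and the normalization $g^2=\eta'(-1)p=-p$ — is what forces the case split on $p\bmod4$ and ultimately pins down the signs in $A_\varepsilon$ and $B_\varepsilon$; everything else is orthogonality and counting.
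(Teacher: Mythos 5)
Your outline is sound and follows essentially the same route as the paper: the paper likewise computes the symbol counts $N(\rho)$ by combining additive-character orthogonality with a quadratic-character detection of $\mathrm{Tr}(x)\in Sq$ (its device $\sum_{y\in\mathbb{F}_p}\zeta_p^{y^2\mathrm{Tr}(x)}$ is exactly your Gauss-sum lift of $\bar{\eta}\circ\mathrm{Tr}$), reduces everything to quadratic Gauss sums over $\mathbb{F}_r$ and $\mathbb{F}_p$, and classifies the nonzero $a$ by $\eta(a)$ and $\bar{\eta}(\mathrm{Tr}(a^{-1}))$ with precisely the frequencies $p^{m-1}-1$ and $\frac{p-1}{4}\bigl(p^{m-1}\pm p^{\lfloor (m-1)/2\rfloor}\bigr)$ you indicate (Lemmas~\ref{lem:N0 rho}--\ref{lem:nij2}). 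The only presentational difference is that the paper evaluates one auxiliary triple sum $A$ yielding $N(\rho)-N_1(\rho)$ and combines it with the known values of $N_0(\rho)$ and $N(\rho)+N_1(\rho)$, instead of expanding the indicator $\frac{1}{2}\bigl(1+\bar{\eta}(\mathrm{Tr}(x))\bigr)$ directly as you propose.
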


\begin{example}
(i) Let $(p,m)=(3,5)$. Then by Theorem~\ref{thcwe:CD1,-1}, the code $C_{D}$ has
parameters $[81, 5, 51]$ and complete weight enumerator
\begin{eqnarray*}
w_0^{81}&+&36w_0^{30}w_1^{30}w_2^{21}+36w_0^{30}w_1^{21}w_2^{30}+80w_0^{27}w_1^{27}w_2^{27}\\
&+&45w_0^{24}w_1^{33}w_2^{24}+45w_0^{24}w_1^{24}w_2^{33},
\end{eqnarray*}
which is verified by Magma program. This is a three-weight linear code.

(ii) Let $(p,m)=(7,2)$. Then by Theorem~\ref{thcwe:CD1,-1}, the code $C_{D}$ is
a $[21, 2, 15]$ three-weight linear code with
complete weight enumerator
\begin{eqnarray*}
w_0^{21}&+&6(w_0w_1w_2 w_3w_4 w_5w_6)^{3}\\
&+&9w_0^{6}(w_1w_2w_4)^{3}(w_3w_5w_6)^{2}+9w_0^{6}(w_1w_2w_4)^{2}(w_3w_5w_6)^{3}\\
&+&12(w_1w_2w_4)^{4}(w_3w_5w_6)^{3}+12(w_1w_2w_4)^{3}(w_3w_5w_6)^{4},
\end{eqnarray*}
which is confirmed by Magma program.

\end{example}

Let $p\equiv1\mod 4$. For $i=0,1,2,3$, we denote the cyclotomic classes of order 4 in $\mathbb{F}_{p}$ by $C_i^{(4,p)}$, which is simplified as $C_i$ in the sequel.
\begin{theorem}\label{thcwe:CD1,-1 p=1mod4}
Let $p\equiv1\mod 4$ and $m$ be odd. Then the code $C_{D}$ of~\eqref{def:CD1-1} is a $[\frac{p-1}{2}p^{m-1},m]$ three-weight linear code with
complete weight enumerator
\begin{eqnarray*}
&&w_0^{\frac{p-1}{2}p^{m-1}}+(p^{m-1}-1)\prod_{\rho\in \mathbb{F}_{p}}w_{\rho}^{\frac{p-1}{2}p^{m-2}}\\
&&+\frac{p-1}{8}(p^{m-1}+ p^{\frac{m-1}{2}})\sum_{i=0}^3 w_0^{\frac{p-1}{2}(p^{m-2}-p^{\frac{m-3}{2}})}
\prod_{\rho\in C_i}w_{\rho}^{A_1} \prod_{z\in \mathbb{F}_{p}^*\setminus C_i}w_{z}^{B_1}\\
&&+\frac{p-1}{8}(p^{m-1}- p^{\frac{m-1}{2}})\sum_{i=0}^3 w_0^{\frac{p-1}{2}(p^{m-2}+p^{\frac{m-3}{2}})}
\prod_{\rho\in C_i}w_{\rho}^{A_{-1}} \prod_{z\in \mathbb{F}_{p}^*\setminus C_i}w_{z}^{B_{-1}}
,
\end{eqnarray*}
where, for $\varepsilon \in \{1,-1\}$,
\begin{eqnarray*}
A_\varepsilon&=&
    \frac{p-1}{2}p^{m-2}+\frac{\varepsilon }{2}(3 p^{\frac{m-1}{2}}+p^{\frac{m-3}{2}}),\\
B_\varepsilon &=&
 \frac{p-1}{2}p^{m-2}- \frac{\varepsilon }{2}( p^{\frac{m-1}{2}}-p^{\frac{m-3}{2}}).
\end{eqnarray*}

\end{theorem}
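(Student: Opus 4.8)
The plan is to compute, for each $a \in \mathbb{F}_r^{*}$ and each $\rho \in \mathbb{F}_p$, the multiplicity
$$N_\rho(a) := \#\{x \in D : \mathrm{Tr}(ax^2) = \rho\},$$
so that $\mathrm{CWE}(C_D) = w_0^{n} + \sum_{a \in \mathbb{F}_r^{*}} \prod_{\rho \in \mathbb{F}_p} w_\rho^{N_\rho(a)}$ with $n = \tfrac{p-1}{2}p^{m-1}$. Write $\zeta_p = e^{2\pi\sqrt{-1}/p}$, let $\chi(\cdot) = \zeta_p^{\mathrm{Tr}(\cdot)}$ be the canonical additive character of $\mathbb{F}_r$, let $\eta$ and $\eta'$ be the quadratic characters of $\mathbb{F}_p$ and $\mathbb{F}_r$ with Gauss sums $g(\eta)$ and $g(\eta')$, and recall that $g(\eta') = (-1)^{m-1}(\sqrt{p^{*}})^{m} = p^{m/2}$ and $\eta'|_{\mathbb{F}_p^{*}} = \eta$ when $m$ is odd and $p \equiv 1 \pmod 4$. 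Since $x \in D$ is the condition $\mathrm{Tr}(x) \in Sq$ and $\mathbf{1}[u \in Sq] = \tfrac12(\eta(u) + \eta(u)^2)$ (with $\eta(0) = 0$), I would split $N_\rho(a) = \tfrac12(T_1 + T_2)$ with $T_1 = \sum_{x \in \mathbb{F}_r}\eta(\mathrm{Tr}(x))\,\mathbf{1}[\mathrm{Tr}(ax^2) = \rho]$ and $T_2 = \sum_{x \in \mathbb{F}_r,\ \mathrm{Tr}(x) \ne 0}\mathbf{1}[\mathrm{Tr}(ax^2) = \rho]$, then expand each indicator by additive-character orthogonality over $\mathbb{F}_p$ (and, in $T_2$, the hyperplane indicator of $\mathrm{Tr}(x)=0$ likewise). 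Both pieces reduce to the Weil sums $\sum_{x \in \mathbb{F}_r}\chi(y a x^2)$ and $\sum_{x \in \mathbb{F}_r}\chi(y a x^2 + u x)$, which completing the square evaluates as $\eta'(ya)g(\eta')$ and $\eta'(ya)g(\eta')\,\zeta_p^{-u^2 c(a)/y}$ with $c(a) := \mathrm{Tr}(1/(4a)) \in \mathbb{F}_p$.

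After this reduction everything is a sum over $\mathbb{F}_p$. The codeword for $a = 0$ contributes $w_0^{\frac{p-1}{2}p^{m-1}}$; for $a \ne 0$ with $c(a) = 0$ all character-sum corrections cancel, the codeword being $\prod_{\rho}w_\rho^{\frac{p-1}{2}p^{m-2}}$, with multiplicity $\#\{a \in \mathbb{F}_r^{*} : c(a) = 0\} = \#\{b \in \mathbb{F}_r^{*} : \mathrm{Tr}(b) = 0\} = p^{m-1}-1$ via $a \mapsto 1/(4a)$ — this gives the first two summands of the theorem. For $a$ with $c(a) \ne 0$ the surviving part of $T_1$ is governed by $\sum_{z \in \mathbb{F}_p^{*}}\eta(z)\,\zeta_p^{-c(a)z^2/y}$; because $\eta(-1) = 1$ one pairs $z$ with $-z$ and this becomes a difference of two Gaussian periods of order $4$. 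Summing over $y$ and using the identity $A^2 + B^2 = \chi_4(-1)\,p$ for $A = \eta_0^{(4)} - \eta_2^{(4)}$, $B = \eta_1^{(4)} - \eta_3^{(4)}$ (equivalently $|g(\chi_4)|^2 = p$ for a quartic character $\chi_4$) collapses all the period terms: the outcome is nonzero exactly when $\rho$ lies in the cyclotomic coset $C_i$ whose index $i$ is read off from the class of $c(a)$, which is precisely where the index $i$ in $\sum_{i=0}^{3}$ and the split $\prod_{\rho \in C_i}\cdots\prod_{z \in \mathbb{F}_p^{*}\setminus C_i}\cdots$ come from.

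Collecting terms, for $c(a) \ne 0$ the multiplicity $N_\rho(a)$ depends only on the sign $\varepsilon := \eta'(a)\,\eta(c(a)) \in \{1,-1\}$ and on whether $\rho \in C_i$: one gets $N_0(a) = \tfrac{p-1}{2}(p^{m-2} - \varepsilon p^{\frac{m-3}{2}})$, $N_\rho(a) = A_\varepsilon$ for $\rho \in C_i$, and $N_\rho(a) = B_\varepsilon$ for $\rho \in \mathbb{F}_p^{*}\setminus C_i$, with $A_\varepsilon, B_\varepsilon$ as stated. For the multiplicities: given $b \in \mathbb{F}_p^{*}$ there are exactly $p^{m-1}$ values of $a$ with $c(a) = b$ (the bijection $a \mapsto d = 1/(4a)$ sends this to the trace fibre $\mathrm{Tr}(d) = b$), and the standard count of squares of $\mathbb{F}_r$ with prescribed trace gives $\#\{d : \mathrm{Tr}(d) = b,\ \eta'(d) = \epsilon_0\} = \tfrac12(p^{m-1} + \epsilon_0\,\eta(b)\,p^{\frac{m-1}{2}})$; since $\varepsilon = 1$ is exactly the condition $\eta'(a) = \eta(b)$, precisely $\tfrac12(p^{m-1} + p^{\frac{m-1}{2}})$ of these $a$ have $\varepsilon = 1$ and $\tfrac12(p^{m-1} - p^{\frac{m-1}{2}})$ have $\varepsilon = -1$, irrespective of which coset $b$ lies in. Multiplying by the $\tfrac{p-1}{4}$ elements $b$ of each coset $C_i$ yields the coefficients $\tfrac{p-1}{8}(p^{m-1} \pm p^{\frac{m-1}{2}})$ of the two families, and assembling the $1 + (p^{m-1}-1) + 8$ groups of codewords gives the claimed complete weight enumerator. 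The three distinct values of $n - N_0(a)$ then certify the three-weight claim, and $N_0(a) < n$ for every $a \ne 0$ forces $a \mapsto (\mathrm{Tr}(ax^2))_{x \in D}$ to be injective, so $\dim C_D = m$.

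The main obstacle is the middle step: evaluating and recombining the order-$4$ Gaussian-period sums coming out of $T_1$, and pushing the bookkeeping far enough to see that the final formula is insensitive to $p \bmod 8$ (which governs whether $-1$ is a biquadratic residue, hence the precise shape of the period identities and of the cosets involved) and to the independent quadratic-residue status of $a$ and of $\mathrm{Tr}(1/(4a))$. Getting every sign right there — so that for each $a$ the distinguished coset is a single $C_i$ and the exponents come out exactly as $A_\varepsilon$ and $B_\varepsilon$ — is the delicate part; the remainder is orthogonality, completing the square, and counting.
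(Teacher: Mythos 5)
Your proposal is correct and it arrives at exactly the enumerator of Theorem~\ref{thcwe:CD1,-1 p=1mod4}, but at the decisive step it takes a genuinely different (and cleaner) route than the paper. The skeleton coincides: your $T_1$ and $T_2$ are precisely $N(\rho)-N_1(\rho)$ and $N(\rho)+N_1(\rho)$, i.e.\ the quantities the paper extracts from the sum $A$ of Subsection~\ref{subsec:pf2} together with Lemma~\ref{lem:sum2}; your Weil-sum reduction with $c(a)=\mathrm{Tr}(1/(4a))$ is Equation~\eqref{eq:A2}; and your fibrewise count $\#\{d:\mathrm{Tr}(d)=b,\ \eta'(d)=\epsilon_0\}=\frac12\bigl(p^{m-1}+\epsilon_0\bar\eta(b)p^{\frac{m-1}{2}}\bigr)$ is the per-element refinement of Lemmas~\ref{lem:nij1} and~\ref{lem:nij2}, used exactly as the paper splits $n_{\pm1,1}$ evenly over the quartic cosets. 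The divergence is the quartic step: the paper substitutes the explicit order-$4$ Gauss periods obtained from the reduced period polynomial (Lemma~\ref{value of gauss period}, involving $p=s^2+t^2$) and then grinds through the sixteen coset cases for $(\mathrm{Tr}(a^{-1}),\rho)$, treated separately for $p\equiv5$ and $p\equiv1\pmod 8$, whereas you keep the periods packaged in a quartic character $\chi_4$ and collapse all the period products with a single Gauss-sum identity; this buys a uniform argument that makes it visible a priori why the odd-$m$ answer involves neither $s,t$ nor $p\bmod 8$, and it produces the sign $\varepsilon=\eta'(a)\eta(c(a))$ (the paper's $\eta(a)\bar\eta(\mathrm{Tr}(a^{-1}))$) and the distinguished coset intrinsically rather than case by case; I verified that your claimed outcome (one coset, determined by the class of $c(a)$, receiving $A_\varepsilon$, the other three receiving $B_\varepsilon$, with $N_0=\frac{p-1}{2}(p^{m-2}-\varepsilon p^{\frac{m-3}{2}})$ and the stated frequencies) matches the cases the paper computes. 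Two small repairs for the write-up: the identity should read $(\eta_0-\eta_2)^2+(\eta_1-\eta_3)^2=|g(\chi_4)|^2=p$, the factor $\chi_4(-1)$ belonging instead to $g(\chi_4)g(\bar\chi_4)=\chi_4(-1)p$ — and that factor, together with the quartic class of $4$, is precisely what decides which coset is distinguished, so it must be carried through the intermediate computation even though it cancels from the final statement; and $T_2$ is not free: it requires the hyperplane count $\#\{x:\mathrm{Tr}(x)=0,\ \mathrm{Tr}(ax^2)=\rho\}$ (the paper's Lemma~\ref{lem:N0 rho}, fed into Lemma~\ref{lem:sum2}), which your sketch assumes but does not derive. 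Neither point affects the soundness of the plan.
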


\begin{example}
Let $(p,m)=(5,3)$. Then by Theorem~\ref{thcwe:CD1,-1 p=1mod4}, the code $C_{D}$ is a three-weight linear code with parameters $[50, 3, 38]$ and complete weight enumerator
\begin{eqnarray*}
w_0^{50}&+&10(w_0w_1w_2w_3)^{12}w_4^{2}+10(w_0w_1w_2w_4)^{12}w_3^{2}+10(w_0w_1w_3w_4)^{12}w_2^{2}\\
&+&10(w_0w_2w_3w_4)^{12}w_1^{2}+24(w_0w_1w_2w_3w_4)^{10}+15(w_0w_1w_2w_3)^{8}w_4^{18}\\
&+&15(w_0w_1w_2w_4)^{8}w_3^{18}+15(w_0w_1w_3w_4)^{8}w_2^{18}+15(w_0w_2w_3w_4)^{8}w_1^{18}.
\end{eqnarray*}
These results can be checked by Magma program.

\end{example}

\begin{theorem}\label{thcwe:CD1,-1 p=1mod4 2}
Let $p\equiv1\mod 4$ and $m$ be even. Let $s$ and $t$ be defined by $p=s^2+t^2$, $s\equiv1\mod 4$. Then the code $C_{D}$ of~\eqref{def:CD1-1} is a $[\frac{p-1}{2}p^{m-1},m]$ three-weight linear code with complete weight enumerator given by
\begin{eqnarray*}
&&w_0^{\frac{p-1}{2}p^{m-1}}+(p^{m-1}-1)\prod_{\rho\in \mathbb{F}_{p}}w_{\rho}^{\frac{p-1}{2}p^{m-2}}\\
&&+\frac{p-1}{8}(p^{m-1}+ p^{\frac{m-2}{2}})\sum_{i=0}^3 w_0^{K_1}
\prod_{\rho_0\in C_i}w_{\rho_0}^{L_1} \prod_{\rho_1\in C_{i+1}}w_{\rho_1}^{R_1}
\prod_{\rho_2\in C_{i+2}}w_{\rho_2}^{S_1} \prod_{\rho_3\in C_{i+3}}w_{\rho_3}^{T_1}\\
&&+\frac{p\!-\!1}{8}(p^{m-1}\!-\! p^{\frac{m-2}{2}})\sum_{i=0}^3 \! w_0^{K_{\!-1\!}}
\prod_{\rho_0\in C_i}\!w_{\rho_0}^{L_{\!-1\!}} \prod_{\rho_1\in C_{i+1}}\!w_{\rho_1}^{R_{\!-1\!}}
\prod_{\rho_2\in C_{i+2}}\!w_{\rho_2}^{S_{\!-1\!}} \prod_{\rho_3\in C_{i+3}}\!w_{\rho_3}^{T_{\!-1\!}},
\end{eqnarray*}
where, for $\varepsilon \in \{1,-1\}$,
\begin{eqnarray*}
K_\varepsilon &=&\frac{p-1}{2}(p^{m-2}-\varepsilon p^{\frac{m-2}{2}}),\\
L_\varepsilon &=& \frac{p-1}{2}p^{m-2}+\varepsilon p^{\frac{m-2}{2}}(1+s),\\
R_\varepsilon &=& \frac{p-1}{2}p^{m-2}- \varepsilon p^{\frac{m-2}{2}}t,\\
S_\varepsilon &=& \frac{p-1}{2}p^{m-2}+ \varepsilon p^{\frac{m-2}{2}}(1-s),\\
T_\varepsilon &=& \frac{p-1}{2}p^{m-2}+ \varepsilon p^{\frac{m-2}{2}}t.
\end{eqnarray*}

\end{theorem}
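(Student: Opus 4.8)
The plan is to compute, for every $a\in\mathbb{F}_r$ and $\rho\in\mathbb{F}_p$, the frequency $N_\rho(a)=\#\{x\in D:\mathrm{Tr}(ax^2)=\rho\}$: the contribution of the codeword $\mathsf{c}_a=(\mathrm{Tr}(ax^2))_{x\in D}$ to $\mathrm{CWE}(C_D)$ is $w_0^{N_0(a)}\prod_{\rho\ne0}w_\rho^{N_\rho(a)}$ and its Hamming weight is $n-N_0(a)$, where $n=\tfrac{p-1}{2}p^{m-1}$, so all the data we want is encoded in the $N_\rho(a)$. Let $\eta$ be the quadratic character of $\mathbb{F}_p$, let $\eta_r,\psi$ be the quadratic character and canonical additive character of $\mathbb{F}_r$, fix a primitive $p$-th root of unity $\zeta_p$, and let $G,G_r$ denote the quadratic Gauss sums over $\mathbb{F}_p,\mathbb{F}_r$. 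The case $a=0$ is the zero codeword; for $a\ne0$ put $T=\mathrm{Tr}(a^{-1})$. Since $D=\{x\in\mathbb{F}_r:\mathrm{Tr}(x)\in Sq\}$, we have $N_\rho(a)=\sum_{u\in Sq}\#\{x\in\mathbb{F}_r:\mathrm{Tr}(x)=u,\ \mathrm{Tr}(ax^2)=\rho\}$. I would expand both linear conditions by additive characters of $\mathbb{F}_p$ and then complete the square twice — first in $x$ over $\mathbb{F}_r$, which produces $G_r$ together with a factor $\psi\bigl(-y_1^2/(4y_2a)\bigr)=\zeta_p^{-y_1^2T/(4y_2)}$, and then in the remaining dual variable over $\mathbb{F}_p$, which produces $G$. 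Using $G^2=\eta(-1)p$ and the key fact that for $m$ even the restriction of $\eta_r$ to $\mathbb{F}_p^*$ is trivial, this should yield, for $T\ne0$,
\[\#\{x\in\mathbb{F}_r:\mathrm{Tr}(x)=u,\ \mathrm{Tr}(ax^2)=\rho\}=p^{m-2}+\tfrac{\eta_r(a)G_r}{p}\,\eta(u^2-\rho T),\]
and a simpler expression when $T=0$. Summing over $u\in Sq$ then gives $N_\rho(a)=\tfrac{p-1}{2}p^{m-2}+\tfrac{\eta_r(a)G_r}{p}\sum_{u\in Sq}\eta(u^2-\rho T)$ for $T\ne0$, while $T=0$ forces $N_\rho(a)=\tfrac{p-1}{2}p^{m-2}$ for all $\rho$; since $\#\{a\in\mathbb{F}_r^*:\mathrm{Tr}(a^{-1})=0\}=p^{m-1}-1$, the latter accounts exactly for the term $(p^{m-1}-1)\prod_{\rho}w_\rho^{(p-1)p^{m-2}/2}$.

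The crux is the cyclotomic sum $S(\rho):=\sum_{u\in Sq}\eta(u^2-\rho T)$. Because $p\equiv1\bmod4$, the squaring map is $2$-to-$1$ from $Sq$ onto the subgroup $C_0$ of fourth powers, so $S(\rho)=2\sum_{v\in C_0}\eta(v-\rho T)$; for $\rho=0$ this equals $2|C_0|=\tfrac{p-1}{2}$. For $\rho\ne0$, substituting $v\mapsto(\rho T)v$ and using $\eta(v-1)=\eta(v)\eta(1-v^{-1})$ together with the relations $\theta_k=(-1)^k\theta_{4-k}$ (hence $\theta_1=-\theta_3$ and $\theta_0+\theta_2=-1$) for the quartic sums $\theta_k:=\sum_{v\in C_k}\eta(v-1)$, one reduces to $\sum_{v\in C_0}\eta(v-\rho T)=\theta_j$ whenever $\rho T\in C_j$. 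The remaining input is the classical evaluation of the cyclotomic numbers (equivalently Jacobi sums) of order $4$: with $p=s^2+t^2$, $s\equiv1\bmod4$, and the sign of $t$ normalised accordingly, this gives $\theta_0=-\tfrac{1+s}{2}$, $\theta_2=\tfrac{s-1}{2}$, $\theta_1=-\theta_3=\tfrac{t}{2}$. Substituting the Davenport–Hasse value $G_r=-p^{m/2}$ (valid for $m$ even and $p\equiv1\bmod4$) and writing $\varepsilon=\eta_r(a)\in\{1,-1\}$, the formula for $N_\rho(a)$ collapses to $K_\varepsilon$ when $\rho=0$ and to one of $L_\varepsilon,R_\varepsilon,S_\varepsilon,T_\varepsilon$ according to which of the cosets $C_i,C_{i+1},C_{i+2},C_{i+3}$ contains $\rho$, where the offset $i$ is pinned down by the cyclotomic class of $T=\mathrm{Tr}(a^{-1})$ through the map $\rho\mapsto\rho T$.

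It then remains to count the codewords of each shape. The multiset $\{N_\rho(a):\rho\in\mathbb{F}_p\}$ depends only on $\varepsilon=\eta_r(a)$ and on the class of $T$; replacing $a$ by $b=a^{-1}$ (note $\eta_r(b)=\eta_r(a)$) and evaluating the Gauss-period sum $\sum_{b:\mathrm{Tr}(b)=c}\eta_r(b)=-G_r/p$ for $c\ne0$ (again using $\eta_r|_{\mathbb{F}_p^*}=1$) gives $\#\{a\in\mathbb{F}_r^*:\eta_r(a)=\varepsilon,\ \mathrm{Tr}(a^{-1})\in C_j\}=\tfrac{p-1}{8}(p^{m-1}+\varepsilon p^{(m-2)/2})$, which are exactly the multiplicities in the statement. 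Assembling the $a=0$ term, the $T=0$ term, and these eight families yields the claimed complete weight enumerator; reading off $N_0(a)$ shows the three nonzero weights are $\tfrac{(p-1)^2}{2}p^{m-2}$ and $\tfrac{(p-1)^2}{2}p^{m-2}\pm\tfrac{p-1}{2}p^{(m-2)/2}$, which are pairwise distinct, and positivity of the weight for $a\ne0$ forces $a\mapsto\mathsf{c}_a$ to be injective, so $C_D$ has dimension $m$ and length $n=\tfrac{p-1}{2}p^{m-1}$. I expect the main obstacle to be the order-$4$ cyclotomic input: computing the $\theta_k$ correctly in terms of $s,t$, resolving the ambiguous sign of $t$ so that it is consistent with the indexing of $\sum_{i=0}^3$, and handling the split into the cases $p\equiv1\bmod8$ and $p\equiv5\bmod8$ should it arise; the two-fold completion of the square, though lengthy, is routine once the restriction $\eta_r|_{\mathbb{F}_p^*}=1$ and the value of $G_r$ are in hand.
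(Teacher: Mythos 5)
Your proposal is correct, and it takes a genuinely different route from the paper's. The paper never computes the symbol counts of a codeword in one pass: it works with the three quantities $N_0(\rho),N(\rho),N_1(\rho)$, detects the condition $\mathrm{Tr}(x)\in Sq$ through the quadratic Gauss--sum indicator $\sum_{y\in\mathbb{F}_p}\zeta_p^{y^2\mathrm{Tr}(x)}$ inside the auxiliary sum $A$ of \eqref{eq:A2}, recovers $N(\rho)-N_1(\rho)$ from $A$ and combines it with $N(\rho)+N_1(\rho)$ from Lemma~\ref{lem:sum2}; for $p\equiv1\pmod 4$ the inner sum $\sum_y\zeta_p^{cy^4}$ is then expressed through the quartic Gauss periods $\eta_i^{(4,p)}$, whose values (nested radicals in $\sqrt p$ and $s$) come from the reduced period polynomial in Lemma~\ref{value of gauss period}, and for even $m$ the proof is the sixteen-case evaluation of $B$ in \eqref{def:B} with $\eta(z)\equiv1$, the frequencies coming from Lemmas~\ref{lem:nij1} and~\ref{lem:nij2} plus the even split over the two quartic classes inside each quadratic class. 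You instead compute $N_\rho(a)$ directly by summing the two-condition counts over $u\in Sq$ and completing the square twice, so the $\mathbb{F}_p$-level cyclotomy enters only through the rational sums $\theta_j=\sum_{v\in C_j}\bar{\eta}(v-1)$, i.e.\ order-$4$ cyclotomic numbers/Jacobi sums evaluated in terms of $s,t$. This bypasses the companion quantity $N_1(\rho)$, Lemma~\ref{lem:sum2}, and the nested-radical Gauss-period values altogether (everything stays rational), at the price of importing the classical quartic cyclotomic-number evaluation; your direct count $\sum_{\mathrm{Tr}(b)=c}\eta(b)=-G(\eta)/p$ is an adequate substitute for Lemmas~\ref{lem:nij1} and~\ref{lem:nij2}. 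I checked the key intermediate identities (the per-$u$ count, the reduction $\sum_{v\in C_0}\bar{\eta}(v-c)=\theta_j$ for $c\in C_j$, the value $G(\eta)=-p^{m/2}$, and the frequency count), and they all hold, so the assembly does produce the stated enumerator and the three weights of Table~\ref{wt:m even}.

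One caution on the quartic input, which you flagged yourself: the explicit values $\theta_0=-(1+s)/2$, $\theta_2=(s-1)/2$ are the ones for $p\equiv1\pmod 8$; for $p\equiv5\pmod 8$ they are interchanged (e.g.\ $p=13$, $s=-3$ gives $\theta_0=-2=(s-1)/2$ and $\theta_2=1=-(1+s)/2$), with the $\pm t$ ambiguity absorbing the corresponding change in $\theta_1=-\theta_3$. This amounts only to the relabeling $C_j\mapsto C_{j+2}$, and since the claimed enumerator sums symmetrically over all four shifts $i$, the final formula is unaffected; but the writeup must record this case split (exactly as the paper's Lemma~\ref{value of gauss period} does for the Gauss periods) when the $\theta_j$ are pinned down.
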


\begin{example}
Let $(p,m)=(5,4)$. Then by Theorem~\ref{thcwe:CD1,-1 p=1mod4 2}, the code $C_{D}$ has parameters $[250, 4, 190]$ and complete weight enumerator
\begin{eqnarray*}
w_0^{250}&+&60w_0^{60}w_1^{60}w_2^{40}w_3^{50}w_4^{40}+60w_0^{60}w_1^{50}w_2^{60}w_3^{40}w_4^{40}+60w_0^{60}w_1^{40}w_2^{50}w_3^{40}w_4^{60}\\
&+&60w_0^{60}w_1^{40}w_2^{40}w_3^{60}w_4^{50}+124(w_0w_1w_2w_3w_4)^{50}+65w_0^{40}w_1^{60}w_2^{60}w_3^{40}w_4^{50}\\
&+&65w_0^{40}w_1^{60}w_2^{50}w_3^{60}w_4^{40}+65w_0^{40}w_1^{50}w_2^{40}w_3^{60}w_4^{60}+65w_0^{40}w_1^{40}w_2^{60}w_3^{50}w_4^{60},
\end{eqnarray*}
which is verified by Magma program. This is a three-weight linear code.

\end{example}

The following corollary gives the weight enumerator of $C_{D}$, which follows immediately from its complete weight enumerator.
\begin{corollary}\label{wt:CD1,-1}

 The code $C_{D}$ of~\eqref{def:CD1-1} has weight distribution given in Table~\ref{wt:m even}
if $m$ is even and Table~\ref{wt:m odd} if $m$ is odd.

\begin{table}[htbp]
\tabcolsep 2mm \caption{The weight distribution of $C_{D}$
if $m$ is even}\label{wt:m even}
\begin{center}\begin{tabular}{ll}
\hline\noalign{\smallskip}
  Weight $i$   & Frequency  $A_i$            \\
    \noalign{\smallskip}\hline\noalign{\smallskip}
  $\frac{(p-1)^2}{2}p^{m-2}$ & $p^{m-1}-1$     \\
   $\frac{p-1}{2}\left((p-1)p^{m-2}+p^{\frac{m-2}{2}}\right)$
            &$\frac{p-1}{2}(p^{m-1}+p^{\frac{m-2}{2}}) $   \\
   $\frac{p-1}{2}\left((p-1)p^{m-2}-p^{\frac{m-2}{2}}\right)$
            &$\frac{p-1}{2}(p^{m-1}-p^{\frac{m-2}{2}}) $   \\
   0         &  1        \\
   \noalign{\smallskip}\hline
  \end{tabular}
\end{center}

\end{table}

\begin{table}[htbp]
\tabcolsep 2mm \caption{The weight distribution of $C_{D}$
if $m$ is odd}\label{wt:m odd}
\begin{center}\begin{tabular}{ll}
\hline\noalign{\smallskip}
  Weight $i$   & Frequency  $A_i$            \\
    \noalign{\smallskip}\hline\noalign{\smallskip}
  $\frac{(p-1)^2}{2}p^{m-2}$ & $p^{m-1}-1$     \\
  $\frac{p-1}{2}\left((p-1)p^{m-2}+p^{\frac{m-3}{2}}\right)$
 &$\frac{p-1}{2}(p^{m-1}+p^{\frac{m-1}{2}}) $   \\
 $\frac{p-1}{2}\left((p-1)p^{m-2}-p^{\frac{m-3}{2}}\right)$
 &$\frac{p-1}{2}(p^{m-1}-p^{\frac{m-1}{2}}) $   \\
   0         &  1        \\
   \noalign{\smallskip}\hline
  \end{tabular}
\end{center}

\end{table}

\end{corollary}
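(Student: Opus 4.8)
The statement is an immediate corollary of Theorems~\ref{thcwe:CD1,-1}--\ref{thcwe:CD1,-1 p=1mod4 2}, so the plan is essentially bookkeeping. First I would record the elementary observation that for a codeword $\mathsf{c}=(c_x)_{x\in D}\in C_D$ of length $n=\tfrac{p-1}{2}p^{m-1}$, the Hamming weight of $\mathsf{c}$ equals $n$ minus the number of zero coordinates; that is, if $w[\mathsf{c}]=w_0^{k_0}w_1^{k_1}\cdots w_{p-1}^{k_{p-1}}$ then $\mathrm{wt}(\mathsf{c})=n-k_0$. Consequently the weight distribution of $C_D$ is obtained from $\mathrm{CWE}(C_D)$ by ignoring the variables $w_1,\dots,w_{p-1}$, replacing each monomial $w_0^{k_0}\cdots$ by the weight value $n-k_0$, and summing the coefficients of all monomials that produce the same value of $n-k_0$.

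Then I would split according to the parity of $m$, each case further splitting by $p\bmod 4$. For $m$ even and $p\equiv3\bmod4$ I apply Theorem~\ref{thcwe:CD1,-1}$(i)$: its four groups of monomials carry $w_0$-exponents $n$, $\tfrac{p-1}{2}p^{m-2}$, $\tfrac{p-1}{2}(p^{m-2}-p^{(m-2)/2})$ and $\tfrac{p-1}{2}(p^{m-2}+p^{(m-2)/2})$, with aggregated coefficients $1$, $p^{m-1}-1$, $2\cdot\tfrac{p-1}{4}(p^{m-1}+p^{(m-2)/2})$ and $2\cdot\tfrac{p-1}{4}(p^{m-1}-p^{(m-2)/2})$, the factor $2$ arising because the inner sum over $i\in\{1,-1\}$ contributes two monomials with the same exponent of $w_0$. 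Subtracting the exponents from $n$ yields the weights $0$, $\tfrac{(p-1)^2}{2}p^{m-2}$, $\tfrac{p-1}{2}((p-1)p^{m-2}+p^{(m-2)/2})$, $\tfrac{p-1}{2}((p-1)p^{m-2}-p^{(m-2)/2})$, i.e.\ exactly Table~\ref{wt:m even}. For $m$ even and $p\equiv1\bmod4$ I would check that Theorem~\ref{thcwe:CD1,-1 p=1mod4 2} produces the same $w_0$-exponents $K_{\pm1}$ and the same aggregated frequencies (now the inner sum runs over $i=0,1,2,3$, contributing a factor $4$ against the prefactor $\tfrac{p-1}{8}$), so both residue classes give the identical Table~\ref{wt:m even}. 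The case $m$ odd is handled the same way, using Theorem~\ref{thcwe:CD1,-1}$(ii)$ for $p\equiv3\bmod4$ and Theorem~\ref{thcwe:CD1,-1 p=1mod4} for $p\equiv1\bmod4$, reading off the $w_0$-exponents $n$, $\tfrac{p-1}{2}p^{m-2}$ and $\tfrac{p-1}{2}(p^{m-2}\mp p^{(m-3)/2})$ and arriving at Table~\ref{wt:m odd}.

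As a consistency check I would verify $\sum_i A_i=p^m$ in each table, which holds since $1+(p^{m-1}-1)+\tfrac{p-1}{2}(p^{m-1}+p^{e})+\tfrac{p-1}{2}(p^{m-1}-p^{e})=p^m$ with $e=(m-2)/2$ or $(m-1)/2$, matching $\dim C_D=m$ as asserted in the theorems. There is no genuine obstacle here; the only points requiring a little care are (a) combining, within each theorem, the several monomials that happen to share the same exponent of $w_0$ before assigning a frequency, and (b) confirming that the three (respectively four) distinct $w_0$-exponents occurring in each theorem really are pairwise distinct --- for instance $p^{m-2}-p^{(m-2)/2}$, $p^{m-2}$, $p^{m-2}+p^{(m-2)/2}$ differ because $p^{(m-2)/2}\neq0$ --- so that no two rows of the table collapse and $C_D$ is genuinely three-weight.
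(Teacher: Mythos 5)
Your proposal is correct and takes essentially the same route as the paper, which simply deduces the corollary from the complete weight enumerators of Theorems~\ref{thcwe:CD1,-1}--\ref{thcwe:CD1,-1 p=1mod4 2} via $\mathrm{wt}(\mathsf{c})=n-k_0$ and aggregation of monomials sharing the same $w_0$-exponent. Your bookkeeping (the factor $2$ against $\tfrac{p-1}{4}$, the factor $4$ against $\tfrac{p-1}{8}$, and the check $\sum_i A_i=p^m$) matches the tables exactly.
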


%
%

From Tables~\ref{wt:m even} and~\ref{wt:m odd}, we observe that the weights of $C_{D}$
have a common divisor $(p-1)/2$. This implies that it can be punctured into a shorter code as follows.

Note that for any $a\in Sq$ and $x\in \mathbb{F}_{r}^* $, $\mathrm{Tr}(x)=0$ if and only if $\mathrm{Tr}(ax)=a\mathrm{Tr}(x)=0$. Then the defining set $D$ can be expressed as
\begin{eqnarray*}
D= Sq \tilde{D} = \{a\tilde{d}:a\in Sq, \tilde{d}\in \tilde{D} \},
\end{eqnarray*}
such that $\tilde{d}_i / \tilde{d}_j \not \in Sq $ for every pair of distinct elements $\tilde{d}_i $,
 $ \tilde{d}_j$ in $\tilde{D}$. Hence, the corresponding linear code $C_{\tilde{D}}$ is the punctured version of $C_{D}$. The following corollary states the parameters and weight distribution of $C_{\tilde{D}}$, which directly follows from Corollary~\ref{wt:CD1,-1}.

 \begin{corollary}\label{wt2:CD1,-1} The code $C_{\tilde{D}}$ is a $[p^{m-1},m]$ three-weight linear codes with weight distribution given in Table~\ref{wt2:m even}
if $m$ is even and Table~\ref{wt2:m odd} if $m$ is odd.
\begin{table}[htbp]
\tabcolsep 2mm \caption{The weight distribution of $C_{\tilde{D}}$
if $m$ is even}\label{wt2:m even}
\begin{center}\begin{tabular}{ll}
\hline\noalign{\smallskip}
  Weight $i$   & Frequency  $A_i$            \\
    \noalign{\smallskip}\hline\noalign{\smallskip}
  $(p-1)p^{m-2}$ & $p^{m-1}-1$     \\
   $(p-1)p^{m-2}+p^{\frac{m-2}{2}}$
            &$\frac{p-1}{2}( p^{m-1}+p^{\frac{m-2}{2}}) $   \\
   $(p-1)p^{m-2}-p^{\frac{m-2}{2}}$
            &$\frac{p-1}{2}(p^{m-1}-p^{\frac{m-2}{2}}) $   \\
   0         &  1        \\
   \noalign{\smallskip}\hline
  \end{tabular}
\end{center}

\end{table}

\begin{table}[htbp]
\tabcolsep 2mm \caption{The weight distribution of $C_{\tilde{D}}$
if $m$ is odd}\label{wt2:m odd}
\begin{center}\begin{tabular}{ll}
\hline\noalign{\smallskip}
  Weight $i$   & Frequency  $A_i$            \\
    \noalign{\smallskip}\hline\noalign{\smallskip}
  $(p-1)p^{m-2}$ & $p^{m-1}-1$     \\
  $(p-1)p^{m-2}+p^{\frac{m-3}{2}}$
 &$\frac{p-1}{2}(p^{m-1}+p^{\frac{m-1}{2}} )$   \\
 $(p-1)p^{m-2}-p^{\frac{m-3}{2}}$
 &$\frac{p-1}{2}(p^{m-1}-p^{\frac{m-1}{2}} )$   \\
   0         &  1        \\
   \noalign{\smallskip}\hline
  \end{tabular}
\end{center}

\end{table}

\end{corollary}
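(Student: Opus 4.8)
The plan is to realise $C_{\tilde D}$ concretely as the code obtained from $C_D$ by retaining only the coordinates indexed by $\tilde D$, and to show that this puncturing multiplies every Hamming weight by the fixed factor $\tfrac{2}{p-1}$ without collapsing any codewords. Corollary~\ref{wt:CD1,-1} then delivers Tables~\ref{wt2:m even} and~\ref{wt2:m odd} at once.

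First I would pin down the length and the decomposition $D=Sq\tilde D$. Since $\mathrm{Tr}(x)\in Sq\iff\mathrm{Tr}(ax)=a\mathrm{Tr}(x)\in Sq$ for every $a\in Sq$, the set $D$ is a union of cosets of $Sq$ in $\mathbb F_r^*$, so one may take one representative in each such coset to form $\tilde D$, giving $|\tilde D|=|D|/|Sq|=p^{m-1}$. Moreover $1\in Sq$ yields $\tilde D\subseteq Sq\tilde D=D$, and the condition $\tilde d_i/\tilde d_j\notin Sq$ for distinct $\tilde d_i,\tilde d_j$ forces the products $s\tilde d$ ($s\in Sq$, $\tilde d\in\tilde D$) to be pairwise distinct; hence $D=\bigsqcup_{\tilde d\in\tilde D}Sq\,\tilde d$ is a disjoint union of $\tfrac{p-1}{2}$ translates of $\tilde D$. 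In particular $C_{\tilde D}$ is literally the code punctured from $C_D$ by deleting all positions outside $\tilde D\subseteq D$, and it has length $p^{m-1}$.

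Next comes the weight comparison. Fix $a\in\mathbb F_r$ and $\tilde d\in\tilde D$; for each $s\in Sq$ we have $\mathrm{Tr}\!\big(a(s\tilde d)^2\big)=s^{2}\,\mathrm{Tr}(a\tilde d^{2})$ with $s^{2}\in\mathbb F_p^*$, so the coordinate of $c_a=(\mathrm{Tr}(ax^{2}))_{x\in D}$ at position $s\tilde d$ vanishes precisely when $\mathrm{Tr}(a\tilde d^{2})=0$, independently of $s$. Summing over the disjoint decomposition above, the number of zero coordinates of $c_a$ equals $\tfrac{p-1}{2}$ times the number of zero coordinates of the codeword $\tilde c_a=(\mathrm{Tr}(a\tilde d^{2}))_{\tilde d\in\tilde D}\in C_{\tilde D}$, whence
\[
\mathrm{wt}(c_a)=\frac{p-1}{2}\,\mathrm{wt}(\tilde c_a)\qquad\text{for every }a\in\mathbb F_r .
\]
Thus $\mathrm{wt}(\tilde c_a)=0$ implies $\mathrm{wt}(c_a)=0$, which by Corollary~\ref{wt:CD1,-1} (where $A_0=1$ for $C_D$) happens only for $a=0$; hence the linear map $a\mapsto\tilde c_a$ is injective, so $C_{\tilde D}$ has $p^{m}$ codewords and dimension $m$. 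Dividing each nonzero weight in Tables~\ref{wt:m even}--\ref{wt:m odd} by $\tfrac{p-1}{2}$ while leaving the frequencies unchanged then reproduces exactly Tables~\ref{wt2:m even}--\ref{wt2:m odd}, the three weights staying distinct because the scaling factor is nonzero.

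I do not anticipate a genuine obstacle: once the coset decomposition $D=Sq\tilde D$ is in hand, everything is elementary arithmetic with the constant $\tfrac{p-1}{2}$. The only point requiring a little care is verifying that puncturing does not merge codewords, i.e.\ that the dimension stays equal to $m$; this is precisely where the input $A_0=1$ for $C_D$ from Corollary~\ref{wt:CD1,-1} is needed, and the weight identity above supplies it.
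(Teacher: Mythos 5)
Your proposal is correct and follows essentially the same route as the paper: it decomposes $D=Sq\,\tilde D$ into cosets of $Sq$, observes that puncturing to $\tilde D$ scales every Hamming weight by the factor $\tfrac{2}{p-1}$, and then reads off the tables from Corollary~\ref{wt:CD1,-1}. The paper states this more tersely (``directly follows''), while you additionally make explicit the coset counting $|\tilde D|=p^{m-1}$ and the injectivity argument keeping the dimension equal to $m$, which are exactly the details implicitly used there.
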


\begin{example}
 Let $(p,m)=(5,3)$. Then the code $C_{\tilde{D}}$ in Corollary~\ref{wt2:CD1,-1} has parameters $[25, 3, 19]$ and weight enumerator
\begin{eqnarray*}
1+40z^{19}+24z^{20}+60z^{21}.
\end{eqnarray*}
The code is almost optimal in the sense that the best known code over $\mathbb{F}_{5}$ of length 25 and dimension 3
has minimum distance 20 according to Markus Grassl's table (see http://www.codetables.de/).

\end{example}

\section{The proofs of the main results}\label{sec:pf}

\subsection{Auxiliary results}\label{subsec:math tool}
In order to prove Theorems~\ref{thcwe:CD1,-1},~\ref{thcwe:CD1,-1 p=1mod4} and~\ref{thcwe:CD1,-1 p=1mod4 2} proposed in Section~\ref{sec:main results}, we will use several results which are depicted and proved in the sequel. We start with cyclotomic classes and group characters.

Recall that $r=p^m$. Let $\alpha$ be a fixed primitive element of $\mathbb{F}_r$ and $r-1=sN$, where $s$, $N$ are two integers with $s>1$ and $N>1$. Define $C_i^{(N,r)}=\alpha
^i\langle\alpha^N\rangle$ for $i=0,1,\cdots,N-1$, where
$\langle\alpha^N\rangle$ denotes the subgroup of $\mathbb{F}_r^*$
generated by $ \alpha^N$. The cosets $C_i^{(N,r)}$ are called the \emph{cyclotomic classes} of order $N$
in $\mathbb{F}_r$.

For each $b\in\mathbb{F}_r$, let $\chi_b$ be an additive character of $\mathbb{F}_r$, which is defined by
\begin{equation*}
\chi_b(x)=\zeta_p^{\text{Tr}(bx)} ~~ \mathrm{for~~ all~~ }x\in\mathbb{F}_r.
\end{equation*}
Here $\zeta_p=\exp\left(\frac{2\pi\sqrt{-1}}{p}\right)$
and $\text{Tr}$ is the absolute trace function. Especially when $b=1$, $\chi_1$ is called the canonical additive character of $\mathbb{F}_r$.
The orthogonal property of additive characters $\chi$, which can
be easily checked, is given by
\begin{equation}\label{eq:ortho}
\sum_{x\in \mathbb{F}_r}\chi(ax)
=\left\{\begin{array}{lll}
r~~~~&&\mbox{if}~~a=0, \\
0~~~~&&\mbox{if}~~a\in\mathbb{F}_r^*.
\end{array}
\right.
\end{equation}

The Gauss
periods of order $N$ are defined by
\begin{equation*}
\eta_{i}^{(N,r)}=\sum_{x\in C_i^{(N,r)}}\chi_1(x), ~~i=0,1,\cdots,N-1.
\end{equation*}



Let $\lambda$ be a multiplicative and $\chi$ an additive character of $\mathbb{F}_r$. Then the Gauss sum $G(\lambda,\chi)$ is defined by
\begin{eqnarray*}
G(\lambda,\chi)=\sum_{x\in\mathbb{F}_r^*}\lambda(x)\chi(x).
\end{eqnarray*}

Let $\eta$ denote the quadratic character of $\mathbb{F}_r$. The associated
Gauss sum $G(\eta, \chi_1 )$ over $\mathbb{F}_{r}$ is denoted by $G(\eta)$. And
the Gauss sum $G(\bar{\eta},\bar{\chi}_1)$ over $\mathbb{F}_{p}$ is denoted by $G(\bar{\eta})$,
where $\bar{\eta}$ and $\bar{\chi}_1$ are the quadratic character and canonical additive character of $\mathbb{F}_{p}$, respectively.

For each $y \in \mathbb{F}_{p}^*$, we have
$\eta(y) = 1$ if $m\geq 2$ is even, and otherwise $\eta(y) = \bar{\eta}(y)$. Moreover, it is well known that $G(\eta)=(-1)^{m-1}\sqrt{p^*}^m$ and $G(\bar{\eta})=\sqrt{p^*}$, where $p^*=\left(\frac{-1}{p}\right)p=(-1)^{\frac{p-1}{2}}p$. See~\cite{ding2015twothree,lidl1983finite} for more information.


The following lemmas will be useful in the sequel.
\begin{lemma}(See Theorem 5.30 of~\cite{lidl1983finite})\label{lm:expo sum k}
Let $\chi$ be a nontrivial additive character of $\mathbb{F}_{r}$, $k\in \mathbb{N}$, and $\lambda$ a multiplicative character of $\mathbb{F}_{r}$ of order $d=\mathrm{gcd}(k,r-1)$. Then
\begin{eqnarray*}\label{eq:expo sum}
\sum_{x\in
\mathbb{F}_{r}}\chi(ax^k+b)=\chi(b)\sum_{j=1}^{d-1}\tilde{\lambda}^j(a)G({\lambda}^j,\chi)
\end{eqnarray*}
for any $a,b\in \mathbb{F}_{r}$ with $a\neq 0$.
Here $\tilde{\lambda}$ denotes the conjugate character of $\lambda$.
\end{lemma}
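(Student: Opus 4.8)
The plan is to derive the identity from two ingredients: a count of the preimages of the power map $x\mapsto x^k$ expressed through multiplicative characters, together with the orthogonality relation~\eqref{eq:ortho} for the additive character $\chi$. First I would factor the additive shift out of the sum: since $\chi$ is an additive character, $\chi(ax^k+b)=\chi(b)\chi(ax^k)$, so that
\begin{eqnarray*}
\sum_{x\in\mathbb{F}_r}\chi(ax^k+b)=\chi(b)\sum_{x\in\mathbb{F}_r}\chi(ax^k),
\end{eqnarray*}
and it suffices to evaluate $\sum_{x\in\mathbb{F}_r}\chi(ax^k)$. I would then isolate the term $x=0$, which contributes $\chi(0)=1$, and re-index the remaining sum over $\mathbb{F}_r^*$ according to the value $y=x^k$. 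Writing $N(y)=\#\{x\in\mathbb{F}_r^*:x^k=y\}$, this gives $\sum_{x\in\mathbb{F}_r^*}\chi(ax^k)=\sum_{y\in\mathbb{F}_r^*}N(y)\chi(ay)$.

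The central step is to encode the count $N(y)$ by a multiplicative character sum. Because $\mathbb{F}_r^*$ is cyclic of order $r-1$ and $d=\gcd(k,r-1)$ divides $r-1$, the map $x\mapsto x^k$ is exactly $d$-to-one onto the subgroup $(\mathbb{F}_r^*)^d$ of $d$-th powers; hence $N(y)=d$ when $y$ is a $d$-th power and $N(y)=0$ otherwise. With $\lambda$ of order $d$, I would establish
\begin{eqnarray*}
N(y)=\sum_{j=0}^{d-1}\lambda^j(y) \qquad (y\in\mathbb{F}_r^*),
\end{eqnarray*}
which follows from the geometric-series evaluation of $\sum_{j=0}^{d-1}\lambda(y)^j$: the sum equals $d$ precisely when $\lambda(y)=1$, equivalently when $y\in(\mathbb{F}_r^*)^d$, and vanishes otherwise since $\lambda(y)^d=1$. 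Here one uses that $\{\lambda^j:0\le j\le d-1\}$ is exactly the group of characters trivial on $(\mathbb{F}_r^*)^d$.

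Substituting and exchanging the order of summation yields $\sum_{x\in\mathbb{F}_r^*}\chi(ax^k)=\sum_{j=0}^{d-1}\sum_{y\in\mathbb{F}_r^*}\lambda^j(y)\chi(ay)$. For each fixed $j$ I would substitute $u=ay$; as $y$ runs over $\mathbb{F}_r^*$ so does $u$ (using $a\neq0$), and $\lambda^j(y)=\lambda^j(a^{-1}u)=\tilde{\lambda}^j(a)\lambda^j(u)$, so the inner sum becomes $\tilde{\lambda}^j(a)G(\lambda^j,\chi)$. Finally I would treat the index $j=0$ separately: there $\lambda^0$ is trivial, $\tilde{\lambda}^0(a)=1$, and by~\eqref{eq:ortho} one has $G(\lambda^0,\chi)=\sum_{u\in\mathbb{F}_r^*}\chi(u)=-1$. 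This $-1$ cancels exactly against the $x=0$ contribution of $+1$ isolated at the outset, leaving
\begin{eqnarray*}
\sum_{x\in\mathbb{F}_r}\chi(ax^k+b)=\chi(b)\sum_{j=1}^{d-1}\tilde{\lambda}^j(a)G(\lambda^j,\chi),
\end{eqnarray*}
which is the asserted identity.

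The main obstacle, and the only genuinely nontrivial point, is the counting identity $N(y)=\sum_{j=0}^{d-1}\lambda^j(y)$: it rests on identifying the image of $x\mapsto x^k$ with the subgroup of $d$-th powers and on recognizing that $\{\lambda^j:0\le j\le d-1\}$ is precisely the set of characters trivial on that subgroup. The remaining effort is the bookkeeping that matches the trivial-character term $j=0$ against the $x=0$ term; it is exactly this cancellation that makes the final summation range begin at $j=1$.
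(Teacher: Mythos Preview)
Your argument is correct and is exactly the standard proof of this identity: factor out $\chi(b)$, replace the sum over $x$ by a sum over values $y=x^k$ weighted by the preimage count $N(y)$, express $N(y)=\sum_{j=0}^{d-1}\lambda^j(y)$ via characters of the cyclic group $\mathbb{F}_r^*$, change variables to extract $\tilde{\lambda}^j(a)$ and recognize the Gauss sum, and finally cancel the $j=0$ term against the $x=0$ contribution using~\eqref{eq:ortho}. Every step is justified, including the identification of the image of $x\mapsto x^k$ with the $d$-th powers and the fact that $\{\lambda^j:0\le j\le d-1\}$ is the full character group of $\mathbb{F}_r^*/(\mathbb{F}_r^*)^d$.

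Note, however, that the paper itself does not supply a proof of this lemma: it is simply quoted as Theorem~5.30 of~\cite{lidl1983finite}. Your proof is essentially the argument given there, so there is nothing to contrast.
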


For $\rho\in \mathbb{F}_{p}^*$ and $a\in \mathbb{F}_{r}$, in order to study the complete weight enumerator, we define
\begin{eqnarray*}
  &N_0(\rho)&=\#\{x\in\mathbb{F}_{r}:\mathrm{Tr}(x)=0, \mathrm{Tr}(ax^2)=\rho\},\\
  &N(\rho)&=\#\{x\in\mathbb{F}_{r}:\mathrm{Tr}(x)\in Sq, \mathrm{Tr}(ax^2)=\rho\},\\
  &N_1(\rho)&=\#\{x\in\mathbb{F}_{r}:\mathrm{Tr}(x)\in Nsq, \mathrm{Tr}(ax^2)=\rho\}.
 \end{eqnarray*}
 The values of $N(\rho)$, $N_0(\rho)$
 and $N_1(\rho)$, which depend mainly on the choice of $a$, are given in the following two lemmas.

\begin{lemma}(\cite{yang2015complete})\label{lem:N0 rho}
Let $a\in \mathbb{F}_{r}^* $ and $\rho\in \mathbb{F}_{p}^*$. Then
\begin{eqnarray*}
  N_0(\rho)=\left\{\begin{array}{lll}
    &p^{m-2}+(-1)^{\frac{p-1}{2}\frac{m-1}{2}}\eta(a)\bar{\eta}(\rho)p^{\frac{m-1}{2}}
    ~~~~~~~~~~if~m~odd ,\mathrm{Tr}(a^{-1}\!)=0,\\
    &p^{m-2}-(-1)^{\frac{p-1}{2}\frac{m-1}{2}}\eta(a)\bar{\eta}(\mathrm{Tr}(a^{-1}\!))p^{\frac{m-3}{2}} ~~if~m~odd ,\mathrm{Tr}(a^{-1})\neq0,\\    &p^{m-2}+(-1)^{\frac{p-1}{2}\frac{m}{2}}\eta(a)p^{\frac{m-2}{2}}
    ~~~~~~~~~~~~~~~~~if~m~even ,\mathrm{Tr}(a^{-1}\!)=0,\\
    &p^{m-2}-(-1)^{\frac{p-1}{2}\frac{m-2}{2}}\eta(a)\bar{\eta}(\rho\mathrm{Tr}(a^{-1}\!))
   p^{\frac{m-2}{2}} \\
    &~~~~~~~~~~~~~~~~~~~~~~~~~~~~~~~~~~~~~~~~~~~~~~~~~~~if~m~even ,\mathrm{Tr}(a^{-1}\!)\neq0.
\end{array} \right.
 \end{eqnarray*}
\end{lemma}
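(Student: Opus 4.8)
The plan is to evaluate $N_0(\rho)$ for $\rho\in\mathbb F_p^*$ and $a\in\mathbb F_r^*$ by reducing it to character sums and then invoking Lemma~\ref{lm:expo sum k} together with the known evaluations of the quadratic Gauss sums $G(\eta)$ and $G(\bar\eta)$. First I would write
\begin{eqnarray*}
N_0(\rho)=\frac{1}{p^2}\sum_{x\in\mathbb F_r}\Bigl(\sum_{y\in\mathbb F_p}\zeta_p^{y\,\mathrm{Tr}(x)}\Bigr)\Bigl(\sum_{z\in\mathbb F_p}\zeta_p^{z(\mathrm{Tr}(ax^2)-\rho)}\Bigr),
\end{eqnarray*}
using the orthogonality relation \eqref{eq:ortho} over $\mathbb F_p$ to detect the two trace conditions. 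Expanding the product gives a main term $p^{m-2}$ (from $y=z=0$), two terms that vanish by \eqref{eq:ortho} (the pure $y$-sum with $z=0$, and the pure $z$-sum with $y=0$ which is $\sum_x\chi_1(zax^2)=0$ for $z\neq0$ since $ax^2$ is balanced up to the quadratic fibers — more precisely one uses Lemma~\ref{lm:expo sum k} with $k=2$ to see $\sum_{x\in\mathbb F_r}\chi_1(zax^2)=\eta(za)G(\eta)$, which needs to be carried rather than dropped), and the genuinely interesting double sum over $y,z\in\mathbb F_p^*$.

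The core computation is the double sum $S=\sum_{y\in\mathbb F_p^*}\sum_{z\in\mathbb F_p^*}\zeta_p^{-z\rho}\sum_{x\in\mathbb F_r}\chi_1\bigl((za)x^2+ y_0 x\bigr)$ where $y_0\in\mathbb F_r$ is the preimage under $\mathrm{Tr}$ of $y$ — here one should instead keep $\sum_{x}\zeta_p^{y\mathrm{Tr}(x)+z\mathrm{Tr}(ax^2)}=\sum_x\chi_1(zax^2+\beta x)$ where $\mathrm{Tr}(\beta x)=y\mathrm{Tr}(x)$, i.e. $\beta=y$ viewed in $\mathbb F_p\subseteq\mathbb F_r$. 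Completing the square, $zax^2+yx=za(x+\frac{y}{2za})^2-\frac{y^2}{4za}$, so the inner sum equals $\chi_1(-\frac{y^2}{4za})\sum_{x}\chi_1(zax^2)=\eta(za)G(\eta)\,\zeta_p^{-\mathrm{Tr}(y^2/(4za))}$. Since $y^2/(4za)$ has trace $\frac{y^2}{4z}\mathrm{Tr}(a^{-1})$ (as $y,z\in\mathbb F_p$), we get
\begin{eqnarray*}
S=\eta(a)G(\eta)\sum_{y\in\mathbb F_p^*}\sum_{z\in\mathbb F_p^*}\eta(z)\,\zeta_p^{-z\rho-\frac{y^2}{4z}\mathrm{Tr}(a^{-1})}.
\end{eqnarray*}
Now I would split on whether $\mathrm{Tr}(a^{-1})=0$ or not, substitute $\eta(z)=\bar\eta(z)$ when $m$ is odd and $\eta(z)=1$ when $m$ is even, and evaluate the resulting sums over $\mathbb F_p$ in terms of $G(\bar\eta)=\sqrt{p^*}$ and the sign factors $(-1)^{\frac{p-1}2\frac{m-1}2}$ etc. coming from $G(\eta)=(-1)^{m-1}\sqrt{p^*}^m$. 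The case $\mathrm{Tr}(a^{-1})\neq0$ produces a Jacobi-type sum $\sum_{y}\zeta_p^{-\frac{y^2}{4z}\mathrm{Tr}(a^{-1})}$ which is again a quadratic Gauss sum over $\mathbb F_p$ contributing $\bar\eta(\cdot)\sqrt{p^*}$, and after collecting one gets the stated $\bar\eta(\rho\,\mathrm{Tr}(a^{-1}))$ or $\bar\eta(\mathrm{Tr}(a^{-1}))$ factor according to the parity of $m$; the case $\mathrm{Tr}(a^{-1})=0$ collapses to $\sum_{z\in\mathbb F_p^*}\eta(z)\zeta_p^{-z\rho}$ times $(p-1)$, giving the term with $\bar\eta(\rho)p^{(m-1)/2}$ for $m$ odd and a plain $p^{(m-2)/2}$ term for $m$ even (since then $\eta(z)=1$ and the $z$-sum is $-1$, while a further contribution of size $p$ appears, which is how $p^{(m-2)/2}$ rather than $p^{(m-3)/2}$ arises).

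The main obstacle is bookkeeping the tangle of sign factors: one must correctly track $\left(\frac{-1}{p}\right)$ through $p^*$, the power $\sqrt{p^*}^m$ versus $\sqrt{p^*}$, and the interplay of $(-1)^{m-1}$ with the evaluation $\sum_{c\in\mathbb F_p^*}\bar\eta(c)\zeta_p^{cu}=\bar\eta(u)G(\bar\eta)$, so that the four branches in Lemma~\ref{lem:N0 rho} come out with exactly the exponents $(-1)^{\frac{p-1}{2}\frac{m-1}{2}}$, $(-1)^{\frac{p-1}{2}\frac{m}{2}}$, $(-1)^{\frac{p-1}{2}\frac{m-2}{2}}$ as claimed. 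A secondary subtlety is confirming that the ``mixed'' term $y=0,\ z\neq0$ indeed contributes $\sum_{z\neq0}\eta(za)G(\eta)\zeta_p^{-z\rho}$ and combining it consistently with the $y\neq0$ part rather than discarding it; once the square is completed uniformly for all $y\in\mathbb F_p$ (including $y=0$) this merges cleanly into $S$ above, which is the cleanest way to organize the argument. With those evaluations in hand the lemma follows by assembling $N_0(\rho)=p^{m-2}+\frac1{p^2}S$ in each of the four cases.
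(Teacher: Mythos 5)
The paper never proves this lemma itself: it is imported verbatim from \cite{yang2015complete}, so there is no internal proof to compare against, and your argument stands on its own. On its merits it is sound and uses exactly the toolkit the paper relies on elsewhere --- orthogonality \eqref{eq:ortho}, completing the square together with Lemma~\ref{lm:expo sum k} (or Theorem 5.33 of \cite{lidl1983finite}), and the evaluations $G(\eta)=(-1)^{m-1}\sqrt{p^*}^{\,m}$, $G(\bar{\eta})=\sqrt{p^*}$; in fact it parallels the computation of $A$ in Subsection~\ref{subsec:pf2}, except that you detect $\mathrm{Tr}(x)=0$ with the linear average $\sum_{y\in\mathbb{F}_p}\zeta_p^{y\mathrm{Tr}(x)}$ rather than the quadratic detector $\zeta_p^{y^2\mathrm{Tr}(x)}$ used there. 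Once the $y=0$ column is absorbed into the completed square, as you indicate at the end, one gets $N_0(\rho)=p^{m-2}+p^{-2}\eta(a)G(\eta)\sum_{z\in\mathbb{F}_p^*}\eta(z)\zeta_p^{-z\rho}\sum_{y\in\mathbb{F}_p}\zeta_p^{-\frac{y^2}{4z}\mathrm{Tr}(a^{-1})}$, and the four branches do come out as stated: for $\mathrm{Tr}(a^{-1})=0$ the inner $y$-sum is $p$ and the $z$-sum is $-1$ ($m$ even, where $\eta|_{\mathbb{F}_p^*}=1$) or $\bar{\eta}(-\rho)G(\bar{\eta})$ ($m$ odd); for $\mathrm{Tr}(a^{-1})\neq 0$ the inner $y$-sum is $\bar{\eta}\bigl(-z\mathrm{Tr}(a^{-1})\bigr)G(\bar{\eta})$, and tracking $\bar{\eta}(-1)=(-1)^{\frac{p-1}{2}}$ through $\sqrt{p^*}^{\,m+1}$ or $G(\eta)p^*$ yields precisely the exponents $(-1)^{\frac{p-1}{2}\frac{m-1}{2}}$, $(-1)^{\frac{p-1}{2}\frac{m}{2}}$, $(-1)^{\frac{p-1}{2}\frac{m-2}{2}}$ and powers $p^{\frac{m-1}{2}}$, $p^{\frac{m-3}{2}}$, $p^{\frac{m-2}{2}}$ of the lemma.

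Two blemishes to fix in a final write-up, neither fatal. First, your opening pass asserts that the $(y=0,\ z\neq 0)$ term ``vanishes by \eqref{eq:ortho}''; it does not (it equals $\eta(za)G(\eta)$ summed against $\zeta_p^{-z\rho}$), and although you correct this in the same sentence, it should be stated once and correctly. Second, the remark that the case $\mathrm{Tr}(a^{-1})=0$ ``collapses to $(p-1)$ times the $z$-sum, while a further contribution of size $p$ appears'' is muddled: after merging the $y=0$ term the factor is exactly $p$, and this is essential --- with $(p-1)$ alone the powers $p^{\frac{m-1}{2}}$ and $p^{\frac{m-2}{2}}$ in the first and third branches would not emerge. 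Also drop the detour through a preimage $y_0\in\mathbb{F}_r$ of $y$ under the trace: since $y\in\mathbb{F}_p$, the identity $y\,\mathrm{Tr}(x)=\mathrm{Tr}(yx)$ is all that is needed, as your own correction already says.
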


%

 \begin{lemma}\label{lem:sum2}
Let $a\in \mathbb{F}_{r}^* $ and $\rho\in \mathbb{F}_p^* $. Then we have the following assertion.
\begin{eqnarray*}
 &&N(\rho)+N_1(\rho)\\
 &&=\left\{\begin{array}{lll}
   p^{m-1}-p^{m-2} ~~~~~~~~~~~~~~~~~~~~~~~~~~~~~~~~if~m~even,\mathrm{Tr}(a^{-1}\!)=0,\\
   p^{m-1}-p^{m-2} ~~~~~~~~~~~~~~~~~~~~~~~~~~~~~~~~if~m~odd,~\mathrm{Tr}(a^{-1}\!)=0,\\
   p^{m-1}-p^{m-2}+\eta(a)(-1)^{\frac{p-1}{2}\frac{m}{2}}p^{\frac{m-2}{2}}
   \left(1+\bar{\eta}(-\rho\mathrm{Tr}(a^{-1}\!))\right) \\ ~~~~~~~~~~~~~~~~~~~~~~~~~~~~~~~~~~~~~~~~~~~~~~~~if~m~even,\mathrm{Tr}(a^{-1}\!)\neq0,\\
   p^{m-1}-p^{m-2}+\eta(a)(-1)^{\frac{p-1}{2}\frac{m-1}{2}}p^{\frac{m-3}{2}}
   \left(\bar{\eta}(\rho)p+\bar{\eta}(\mathrm{Tr}(a^{-1}\!))\right) \\    ~~~~~~~~~~~~~~~~~~~~~~~~~~~~~~~~~~~~~~~~~~~~~~~~if~m~odd,~\mathrm{Tr}(a^{-1}\!)\neq0.
\end{array} \right.
 \end{eqnarray*}
\end{lemma}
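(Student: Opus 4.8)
The plan is to compute $N(\rho)+N_1(\rho)$ directly by expressing both counts with additive characters and reducing everything to the quantity $N_0(\rho)$, which is already evaluated in Lemma~\ref{lem:N0 rho}, together with a single Weil-type exponential sum handled by Lemma~\ref{lm:expo sum k}. First I would observe the elementary identity
\begin{eqnarray*}
N(\rho)+N_1(\rho)+N_0(\rho)=\#\{x\in\mathbb{F}_r:\mathrm{Tr}(ax^2)=\rho\},
\end{eqnarray*}
since $\{Sq,Nsq,\{0\}\}$ partitions $\mathbb{F}_p$ and every $x$ has $\mathrm{Tr}(x)$ in exactly one of these sets. Hence it suffices to evaluate $M(\rho):=\#\{x\in\mathbb{F}_r:\mathrm{Tr}(ax^2)=\rho\}$ and then subtract the known value of $N_0(\rho)$.

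To compute $M(\rho)$, I would write
\begin{eqnarray*}
M(\rho)=\frac1p\sum_{x\in\mathbb{F}_r}\sum_{y\in\mathbb{F}_p}\zeta_p^{y(\mathrm{Tr}(ax^2)-\rho)}
=p^{m-1}+\frac1p\sum_{y\in\mathbb{F}_p^*}\zeta_p^{-y\rho}\sum_{x\in\mathbb{F}_r}\chi_1(yax^2),
\end{eqnarray*}
and apply Lemma~\ref{lm:expo sum k} with $k=2$, $d=\gcd(2,r-1)=2$, and $\lambda=\eta$ the quadratic character of $\mathbb{F}_r$, giving $\sum_{x\in\mathbb{F}_r}\chi_1(yax^2)=\eta(ya)G(\eta)$. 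Using $\eta(ya)=\eta(y)\eta(a)$ and the quoted facts $G(\eta)=(-1)^{m-1}\sqrt{p^*}^m$, plus $\eta(y)=1$ for $m$ even and $\eta(y)=\bar\eta(y)$ for $m$ odd, the inner sum over $y$ collapses to $\sum_{y\in\mathbb{F}_p^*}\zeta_p^{-y\rho}\eta(y)$, which is either a Ramanujan-type sum equal to $-1$ (the $m$ even case) or $\bar\eta(-\rho)G(\bar\eta)=\bar\eta(-\rho)\sqrt{p^*}$ (the $m$ odd case, after pulling out $\bar\eta(-1/\rho)=\bar\eta(-\rho)$). Collecting the $\sqrt{p^*}$ factors and rewriting $\sqrt{p^*}^{\,2\lfloor m/2\rfloor}$ as a signed power of $p$ via $p^*=(-1)^{\frac{p-1}{2}}p$ yields a clean closed form for $M(\rho)$ in each parity, and in particular $M(\rho)$ does \emph{not} depend on whether $\mathrm{Tr}(a^{-1})$ vanishes.

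Finally I would substitute $N(\rho)+N_1(\rho)=M(\rho)-N_0(\rho)$ and read off the four cases of Lemma~\ref{lem:N0 rho}. The $\mathrm{Tr}(a^{-1})=0$ cases are the cleanest: there the $p^{\frac{m-2}{2}}$ (resp.\ $p^{\frac{m-1}{2}}$) terms of $M(\rho)$ and $N_0(\rho)$ must cancel, leaving exactly $p^{m-1}-p^{m-2}$, matching the first two branches of the claim. In the $\mathrm{Tr}(a^{-1})\neq0$ cases one combines the constant-sign contribution from $M(\rho)$ with the $\bar\eta(\rho\,\mathrm{Tr}(a^{-1}))$- or $\bar\eta(\mathrm{Tr}(a^{-1}))$-dependent term from $N_0(\rho)$; here the only real bookkeeping is tracking the sign $(-1)^{\frac{p-1}{2}\cdot\frac{m}{2}}$ versus $(-1)^{\frac{p-1}{2}\cdot\frac{m-1}{2}}$ and the quadratic-character identities $\bar\eta(-1)=(-1)^{\frac{p-1}{2}}$, $\bar\eta(\rho)^2=1$. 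The main obstacle, such as it is, is precisely this sign/normalization matching between $\sqrt{p^*}^m$ and the stated $(-1)$-powers times $p^{\frac{m-2}{2}}$ or $p^{\frac{m-3}{2}}p$; there is no analytic difficulty, only the need to expand $p^*$ consistently and keep $\eta(a)$ intact throughout. Once the $m$-even and $m$-odd computations of $M(\rho)$ are in hand, all four branches follow by direct substitution.
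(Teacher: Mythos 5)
Your proposal is correct and follows essentially the same route as the paper: both use the partition identity $N_0(\rho)+N(\rho)+N_1(\rho)=\#\{x\in\mathbb{F}_r:\mathrm{Tr}(ax^2)=\rho\}$, evaluate that count by an additive-character/quadratic Gauss sum computation, and then subtract the value of $N_0(\rho)$ from Lemma~\ref{lem:N0 rho}. The only cosmetic difference is that the paper cites Theorem 5.33 of~\cite{lidl1983finite} for the weighted quadratic exponential sum, whereas you rederive it from Lemma~\ref{lm:expo sum k} together with the known values of $G(\eta)$ and $G(\bar{\eta})$; your sign bookkeeping reproduces the stated four branches correctly.
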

\begin{proof}

Note that
\begin{eqnarray*}
N_0(\rho)+N(\rho)+N_1(\rho)=\#\{x\in\mathbb{F}_{r}: \mathrm{Tr}(ax^2)=\rho\},
 \end{eqnarray*}
 where $\rho\in \mathbb{F}_p^* $. This leads to
\begin{eqnarray*}
N_0(\rho)+N(\rho)+N_1(\rho)
=p^{m-1}+p^{-1}\sum_{z\in\mathbb{F}_{p}^*}\zeta_p^{-z\rho}\sum_{x\in\mathbb{F}_{r}}\zeta_p^{z\mathrm{Tr}(ax^2)}.
 \end{eqnarray*}
Applying Theorem 5.33 of~\cite{lidl1983finite}, we can deduce that
\begin{eqnarray*}
 \sum_{z\in\mathbb{F}_{p}^*}\sum_{x\in\mathbb{F}_{r}}\zeta_p^{z\mathrm{Tr}(ax^2)-z\rho}
  =\left\{\begin{array}{lll}
   \eta(a)(-1)^{\frac{p-1}{2}\frac{m}{2}}p^{\frac{m}{2}}         &&~\mathrm{if}~m~\mathrm{even},\\
   \eta(a)\bar{\eta}(\rho)(-1)^{\frac{p-1}{2}\frac{m-1}{2}}p^{\frac{m+1}{2}}    &&~\mathrm{if}~m~\mathrm{odd}.
\end{array} \right.
 \end{eqnarray*}
The desired conclusion then follows from Lemma~\ref{lem:N0 rho}. \hfill\space$\qed$
\end{proof}

%
%

%
%

The following two lemmas will help us to determine the frequency of each complete weight in $C_D$.
\begin{lemma}(\cite{yang2015complete})\label{lem:nij1}
For any $a\in \mathbb{F}_{r}^* $, let
\begin{eqnarray}\label{def:nij}
    n_{i,j}&=&\#\{a\in \mathbb{F}_{r}^*:\eta(a)=i ,~ \bar{\eta}(\mathrm{Tr}(a^{-1}))=j \},
    ~~i,j\in\{1,-1\}.
\end{eqnarray}\\
$(i)$ If $m$ is even, then we have
\begin{eqnarray*}
     n_{1,1}=   n_{1,-1}=\frac{p-1}{4}\left(p^{m-1}+(-1)^{\frac{p-1}{2}\frac{m}{2}}p^{\frac{m-2}{2}}\right).
 \end{eqnarray*}\\
$(ii)$ If $m$ is odd, then we have
\begin{eqnarray*}
  \left\{\begin{array}{lll}
   n_{1,1}&=&\frac{p-1}{4}\left(p^{m-1}+(-1)^{\frac{p-1}{2}\frac{m-1}{2}}p^{\frac{m-1}{2}}\right),\\
   n_{1,-1}&=&\frac{p-1}{4}\left(p^{m-1}-(-1)^{\frac{p-1}{2}\frac{m-1}{2}}p^{\frac{m-1}{2}}\right).\\
\end{array} \right.
 \end{eqnarray*}
\end{lemma}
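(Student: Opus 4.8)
The plan is to pass from $a$ to $a^{-1}$, thereby reducing the problem to counting the elements $b\in\mathbb{F}_r^*$ lying in a prescribed square class whose trace lies in $Sq$ (resp.\ $Nsq$), and then to evaluate that count with a single quadratic Gauss sum over $\mathbb{F}_r$.

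First, since $\eta$ has order two we have $\eta(a^{-1})=\eta(a)$, and $a\mapsto a^{-1}$ permutes $\mathbb{F}_r^*$; hence
\[
n_{i,j}=\#\{b\in\mathbb{F}_r^*:\eta(b)=i,\ \bar\eta(\mathrm{Tr}(b))=j\}.
\]
Only $i=1$ is needed for the statement. Every nonzero square $b$ of $\mathbb{F}_r$ equals $y^2$ for exactly two $y\in\mathbb{F}_r^*$, and $y=0$ may be adjoined harmlessly because $\bar\eta(\mathrm{Tr}(0))=\bar\eta(0)=0\notin\{1,-1\}$; therefore
\[
n_{1,j}=\tfrac12\,\#\{y\in\mathbb{F}_r:\bar\eta(\mathrm{Tr}(y^2))=j\}
=\tfrac12\sum_{\substack{c\in\mathbb{F}_p^*\\ \bar\eta(c)=j}}n(c),\qquad n(c):=\#\{y\in\mathbb{F}_r:\mathrm{Tr}(y^2)=c\}.
\]

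Next I would evaluate $n(c)$ for $c\in\mathbb{F}_p^*$. Expanding the indicator of $\mathrm{Tr}(y^2)=c$ by the orthogonality relation~\eqref{eq:ortho} over $\mathbb{F}_p$ and isolating the $z=0$ term gives $n(c)=p^{m-1}+\tfrac1p\sum_{z\in\mathbb{F}_p^*}\zeta_p^{-zc}\sum_{y\in\mathbb{F}_r}\chi_1(zy^2)$, and by Lemma~\ref{lm:expo sum k} with $k=2$ the inner sum is $\eta(z)G(\eta)$, the relevant quadratic character $\eta$ being self-conjugate. Hence $n(c)=p^{m-1}+\tfrac{G(\eta)}{p}\sum_{z\in\mathbb{F}_p^*}\eta(z)\zeta_p^{-zc}$. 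If $m$ is even, then $\eta(z)=1$ on $\mathbb{F}_p^*$, so the remaining sum equals $-1$, and with $G(\eta)=(-1)^{m-1}\sqrt{p^*}^m$ one gets $n(c)=p^{m-1}+(-1)^{\frac{p-1}{2}\frac{m}{2}}p^{\frac{m-2}{2}}$ for every $c\in\mathbb{F}_p^*$; since $|Sq|=|Nsq|=\frac{p-1}{2}$, this makes $\sum_{c\in Sq}n(c)=\sum_{c\in Nsq}n(c)=\frac{p-1}{2}\bigl(p^{m-1}+(-1)^{\frac{p-1}{2}\frac{m}{2}}p^{\frac{m-2}{2}}\bigr)$, and halving gives part~$(i)$. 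If $m$ is odd, then $\eta(z)=\bar\eta(z)$ on $\mathbb{F}_p^*$, so the remaining sum equals $\bar\eta(-1)\bar\eta(c)G(\bar\eta)$, and inserting $G(\eta)=(-1)^{m-1}\sqrt{p^*}^m$, $G(\bar\eta)=\sqrt{p^*}$ and $\bar\eta(-1)=(-1)^{\frac{p-1}{2}}$ collapses $\tfrac1p\bar\eta(-1)G(\eta)G(\bar\eta)$ to $(-1)^{\frac{p-1}{2}\frac{m-1}{2}}p^{\frac{m-1}{2}}$, so that $n(c)=p^{m-1}+(-1)^{\frac{p-1}{2}\frac{m-1}{2}}p^{\frac{m-1}{2}}\bar\eta(c)$; summing over $c\in Sq$, resp.\ over $c\in Nsq$, and halving yields the two formulas of part~$(ii)$.

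The only delicate point is this final sign reduction: one must carefully track $\sqrt{p^*}^m$, the factor $(-1)^{m-1}$ and the product $G(\eta)G(\bar\eta)$ under the two parities of $m$, so that the extraneous powers of $(-1)^{\frac{p-1}{2}}$ cancel and precisely the exponents appearing in the statement survive; everything else is routine bookkeeping. As a consistency check, the value of $n(c)$ found here is the $a=1$ specialization of the exponential sum evaluated in the proof of Lemma~\ref{lem:sum2}.
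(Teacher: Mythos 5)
Your argument is correct, and it is worth noting that the paper itself does not prove Lemma~\ref{lem:nij1} at all: it is quoted from the reference~\cite{yang2015complete}, so your write-up supplies a self-contained proof where the paper only cites one. The two key moves are both sound: the inversion $a\mapsto a^{-1}$ is legitimate because $\eta(a^{-1})=\eta(a)$ and inversion permutes $\mathbb{F}_r^*$, and the passage from nonzero squares $b$ to all $y\in\mathbb{F}_r$ with $b=y^2$ correctly introduces the factor $\tfrac12$ (the $y=0$ term drops out since $\bar\eta(0)\notin\{1,-1\}$). The evaluation of $n(c)=\#\{y\in\mathbb{F}_r:\mathrm{Tr}(y^2)=c\}$ via orthogonality and $\sum_{y}\chi_1(zy^2)=\eta(z)G(\eta)$ is exactly the kind of computation the paper performs in the proof of Lemma~\ref{lem:sum2}, and your sign bookkeeping checks out: for $m$ even, $G(\eta)=-(-1)^{\frac{p-1}{2}\frac m2}p^{\frac m2}$ gives $n(c)=p^{m-1}+(-1)^{\frac{p-1}{2}\frac m2}p^{\frac{m-2}{2}}$ independently of $c$, yielding $n_{1,1}=n_{1,-1}$; for $m$ odd, $\tfrac1p\,\bar\eta(-1)G(\eta)G(\bar\eta)=(-1)^{\frac{p-1}{2}\frac{m+3}{2}}p^{\frac{m-1}{2}}=(-1)^{\frac{p-1}{2}\frac{m-1}{2}}p^{\frac{m-1}{2}}$ since the exponents differ by the even number $p-1$, which gives the stated split between $c\in Sq$ and $c\in Nsq$. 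Your closing consistency remark is also accurate: your $n(c)$ is the $a=1$, $\rho=c$ value of $N_0(\rho)+N(\rho)+N_1(\rho)$ computed in Lemma~\ref{lem:sum2}. In short, the proposal is complete and correct, and arguably more economical than deriving the counts from the machinery of~\cite{yang2015complete}, since it needs only one quadratic Gauss sum over $\mathbb{F}_r$ and one over $\mathbb{F}_p$.
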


\begin{lemma}\label{lem:nij2}
For any $a\in \mathbb{F}_{r}^* $, let $n_{i,j}$ be defined by~\eqref{def:nij}.\\
$(i)$ If $m$ is even, then we have
\begin{eqnarray*}
     n_{-1,1}=   n_{-1,-1}=\frac{p-1}{4}\left(p^{m-1}-(-1)^{\frac{p-1}{2}\frac{m}{2}}p^{\frac{m-2}{2}}\right).
 \end{eqnarray*}\\
$(ii)$ If $m$ is odd, then we have
\begin{eqnarray*}
  \left\{\begin{array}{lll}
   n_{-1,1}&=&\frac{p-1}{4}\left(p^{m-1}-(-1)^{\frac{p-1}{2}\frac{m-1}{2}}p^{\frac{m-1}{2}}\right),\\
   n_{-1,-1}&=&\frac{p-1}{4}\left(p^{m-1}+(-1)^{\frac{p-1}{2}\frac{m-1}{2}}p^{\frac{m-1}{2}}\right).\\
\end{array} \right.
 \end{eqnarray*}
\end{lemma}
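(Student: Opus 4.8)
The plan is to reduce the count $n_{-1,\pm1}$ to the already-established count $n_{1,\pm1}$ of Lemma~\ref{lem:nij1} by exploiting the total over all sign patterns. First I would observe that for a fixed sign $j\in\{1,-1\}$,
\begin{eqnarray*}
n_{1,j}+n_{-1,j}=\#\{a\in\mathbb{F}_r^*:\bar\eta(\mathrm{Tr}(a^{-1}))=j\}.
\end{eqnarray*}
Since $a\mapsto a^{-1}$ is a bijection of $\mathbb{F}_r^*$, this equals $\#\{b\in\mathbb{F}_r^*:\bar\eta(\mathrm{Tr}(b))=j\}$, i.e. the number of nonzero $b$ whose trace is a square (resp.\ non-square) in $\mathbb{F}_p^*$. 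This number is independent of $j$: for a fixed $c\in\mathbb{F}_p^*$ the fibre $\{b\in\mathbb{F}_r:\mathrm{Tr}(b)=c\}$ has size $p^{m-1}$ (all $b$ here are automatically nonzero), and $Sq$ and $Nsq$ each have $\tfrac{p-1}{2}$ elements, so both counts equal $\tfrac{p-1}{2}p^{m-1}$. Hence $n_{1,j}+n_{-1,j}=\tfrac{p-1}{2}p^{m-1}$ for each $j$, and therefore
\begin{eqnarray*}
n_{-1,j}=\tfrac{p-1}{2}p^{m-1}-n_{1,j}.
\end{eqnarray*}

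Substituting the formulas from Lemma~\ref{lem:nij1} then gives the claim directly. In the even case, $n_{1,1}=n_{1,-1}=\tfrac{p-1}{4}(p^{m-1}+(-1)^{\frac{p-1}{2}\frac{m}{2}}p^{\frac{m-2}{2}})$, so
\begin{eqnarray*}
n_{-1,1}=n_{-1,-1}=\tfrac{p-1}{2}p^{m-1}-\tfrac{p-1}{4}\bigl(p^{m-1}+(-1)^{\frac{p-1}{2}\frac{m}{2}}p^{\frac{m-2}{2}}\bigr)
=\tfrac{p-1}{4}\bigl(p^{m-1}-(-1)^{\frac{p-1}{2}\frac{m}{2}}p^{\frac{m-2}{2}}\bigr),
\end{eqnarray*}
as stated. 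In the odd case the same subtraction turns $n_{1,1}$ into $n_{-1,1}$ with the sign of the $p^{\frac{m-1}{2}}$ term flipped, and likewise for $n_{1,-1}\mapsto n_{-1,-1}$, which is exactly the assertion of part~(ii).

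An alternative, more self-contained route would be to rerun the character-sum argument of Lemma~\ref{lem:nij1} from scratch: write $n_{i,j}=\tfrac{1}{4}\sum_{a\in\mathbb{F}_r^*}(1+i\eta(a))(1+j\bar\eta(\mathrm{Tr}(a^{-1})))$ (taking care, when $m$ is even, that $\eta$ is trivial on $\mathbb{F}_p^*$ but not on all of $\mathbb{F}_r^*$, so the indicator of $\eta(a)=-1$ is still meaningful), expand into four sums, and evaluate the cross term $\sum_a\eta(a)\bar\eta(\mathrm{Tr}(a^{-1}))$ using Lemma~\ref{lm:expo sum k} together with the value of the Gauss sums $G(\eta)$ and $G(\bar\eta)$ recorded above. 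I expect the deduction-from-Lemma~\ref{lem:nij1} approach to be essentially immediate; the only point requiring a little care is the bijection step $a\leftrightarrow a^{-1}$ and the observation that every $b$ with $\mathrm{Tr}(b)\neq0$ is nonzero, so no correction for $b=0$ is needed. That bookkeeping is the only place an error could creep in, and it is routine.
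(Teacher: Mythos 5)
Your proposal is correct and follows essentially the same route as the paper: both use the identity $n_{1,j}+n_{-1,j}=\#\{a\in\mathbb{F}_r^*:\bar\eta(\mathrm{Tr}(a^{-1}))=j\}=\frac{p-1}{2}p^{m-1}$ and then subtract the values from Lemma~\ref{lem:nij1}. Your extra justification of the count via the bijection $a\mapsto a^{-1}$ and the fibre sizes is just a more explicit rendering of the same step.
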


\begin{proof}
We point out that
\begin{eqnarray*}
 n_{1,j}+n_{-1,j}=\#\{a\in \mathbb{F}_{r}^*:\bar{\eta}(\mathrm{Tr}(a^{-1}))=j\}=\frac{p-1}{2}p^{m-1},
 \end{eqnarray*}
with $j\in\{1,-1\}$.

The desired conclusion then follows from Lemma~\ref{lem:nij1}.
\hfill\space$\qed$
\end{proof}

Consider $p\equiv1\mod 4$. Recall that $\eta_{i}^{(4,p)}=\sum_{x\in C_i^{(4,p)}} \zeta_p^{x}$, where $C_i^{(4,p)}=\beta^i\langle\beta^4\rangle$ for $i=0,1,2,3$, and $\beta$ is a primitive element of $\mathbb{F}_{p}$. In the sequel, we write $\eta_{i}^{(4,p)}$ and $C_i^{(4,p)}$ as $\eta_{i}$ and $C_i$, respectively, until stated. The following lemma plays an important role in determining the complete weight enumerator, in which the value of $\eta_0$ coincides with the result of Theorem 4.2.4 of~\cite{berndt1998gauss}.

%

\begin{lemma}\label{value of gauss period}
Let $p\equiv1\mod 4$. Let $s$ and $t$ be defined by $p=s^2+t^2$, $s\equiv1\mod 4$. The Gauss periods of order 4 over $\mathbb{F}_{p}$ are given as follows.\\
$(i)$ If $p\equiv5\mod 8$, then
\begin{eqnarray*}
\{\eta_0,\eta_2\}&=&\left\{\frac{\sqrt{p}-1}{4}\pm\frac{\sqrt{2}}{4}\sqrt{-\sqrt{p}s-p}\right\},\\
\{\eta_1,\eta_3\}&=&\left\{-\frac{\sqrt{p}+1}{4}\pm\frac{\sqrt{2}}{4}\sqrt{\sqrt{p}s-p}\right\}.
\end{eqnarray*}
\\
$(ii)$ If $p\equiv1\mod 8$, then
\begin{eqnarray*}
\{\eta_0,\eta_2\}&=&\left\{\frac{\sqrt{p}-1}{4}\pm\frac{\sqrt{2}}{4}\sqrt{p-\sqrt{p}s}\right\},\\
\{\eta_1,\eta_3\}&=&\left\{-\frac{\sqrt{p}+1}{4}\pm\frac{\sqrt{2}}{4}\sqrt{p+\sqrt{p}s}\right\}.
\end{eqnarray*}
\end{lemma}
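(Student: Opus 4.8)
The plan is to reduce the evaluation of the order-4 Gauss periods over $\mathbb{F}_p$ to the classical factorization $p = s^2 + t^2$ via the order-4 Gauss sums and the Davenport--Hasse type relations among them. First I would recall the standard link between Gauss periods and Gauss sums: if $\psi$ is a multiplicative character of $\mathbb{F}_p$ of order 4, then for $i=0,1,2,3$ one has $\eta_i = \frac{1}{4}\sum_{j=0}^{3} \bar{\psi}^{\,j}(\beta^i) G(\psi^j, \bar{\chi}_1)$, with the convention $G(\psi^0,\bar{\chi}_1) = -1$. Here $G(\psi^2,\bar{\chi}_1) = G(\bar{\eta}) = \sqrt{p^*}=\sqrt{p}$ since $p\equiv 1\bmod 4$, while $G(\psi,\bar{\chi}_1)$ and $G(\psi^3,\bar{\chi}_1)=\overline{G(\psi,\bar{\chi}_1)}$ are the two conjugate quartic Gauss sums, of absolute value $\sqrt{p}$. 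Writing $G(\psi,\bar{\chi}_1) = a + b\sqrt{-1}$ (after fixing the embedding $\sqrt{-1} = \beta^{(p-1)/4}$), the four periods become explicit real linear combinations of $1$, $\sqrt{p}$, $a$, $b$; pairing $\eta_0$ with $\eta_2$ and $\eta_1$ with $\eta_3$ isolates the dependence on $a$ and $b$ respectively, giving $\{\eta_0,\eta_2\} = \{\tfrac{\sqrt p -1}{4} \pm \tfrac{a}{2}\}$ and $\{\eta_1,\eta_3\} = \{-\tfrac{\sqrt p +1}{4} \pm \tfrac{b}{2}\}$ up to labelling.

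Next I would pin down $a$ and $b$. The norm relation $|G(\psi,\bar{\chi}_1)|^2 = p$ gives $a^2 + b^2 = p$, so $(a,b)$ is, up to sign and order, the pair $(s,t)$. To decide which, I would invoke the classical congruence of Gauss (see Theorem~4.2.4 of~\cite{berndt1998gauss} and the surrounding material, which the statement already cites for $\eta_0$): the quadratic period $\eta_0$ satisfies $4\eta_0 + 1 = \sqrt{p} + 2a$ with $a \equiv -1 \pmod 4$ once $s$ is normalized by $s\equiv 1 \bmod 4$, which forces $2a = 2s$ or $2a=-2s$ according to the stated sign, i.e. fixes $\eta_0 = \tfrac{\sqrt p - 1}{4} + \tfrac{s\,\sqrt{?}}{\cdots}$ — more precisely it identifies the real part $a$ with $s$ (up to the normalization) and leaves $b = \pm t$. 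The $\bmod 8$ dichotomy enters precisely here: whether $2$ is a fourth power mod $p$ (equivalently the behaviour of $\psi(2)$) governs a sign in the Davenport--Hasse / quadratic-to-quartic descent, and this is what separates case $(i)$, $p\equiv 5\bmod 8$, from case $(ii)$, $p\equiv 1 \bmod 8$. I would then simply substitute to rewrite $\pm a/2$ and $\pm b/2$ in the radical form $\pm\tfrac{\sqrt 2}{4}\sqrt{\mp\sqrt p\, s - p}$ etc., checking that $2a = \sqrt{2}\sqrt{2a^2} = \sqrt 2 \sqrt{2s^2}$ and that the sign inside the radical matches $p \mp \sqrt p\, s$ after using $a^2+b^2=p$; the two displayed forms in $(i)$ and $(ii)$ differ only by the interchange $\sqrt{p}s \leftrightarrow -\sqrt{p}s$ inside the radicals, which is exactly the $\psi(2)$ sign.

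The main obstacle I anticipate is not the algebra but the bookkeeping of signs and normalizations: the value of $G(\psi,\bar{\chi}_1)$ depends on the choice of the quartic character $\psi$ (there are two, conjugate) and on the identification of $\sqrt{-1}$ inside $\mathbb{F}_p$ with a fixed fourth root of unity in $\mathbb{C}$, and the sign of $s$ (resp. $t$) is only determined after imposing $s\equiv 1\bmod 4$ (resp. fixing $t$ up to sign). Getting the assignment of $\eta_0$ versus $\eta_2$ — and likewise $\eta_1$ versus $\eta_3$ — correct requires care; this is why the lemma states the answers as unordered pairs $\{\eta_0,\eta_2\}$ and $\{\eta_1,\eta_3\}$ rather than committing to which sign goes with which index. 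I would resolve it by anchoring everything to the known value of $\eta_0$ from~\cite{berndt1998gauss} (as the lemma's statement already flags) and propagating consistently. The separate treatment of $p\equiv 1\bmod 8$ and $p\equiv 5\bmod 8$ is then forced by the standard evaluation of $\psi(2)$, i.e. whether $2$ lies in $C_0$, and one checks the two displayed formulas by direct substitution.
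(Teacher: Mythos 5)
Your overall framework (expanding the quartic periods in terms of the Gauss sums $G(\psi^j)$ via $\eta_k=\frac14\bigl[-1+\sum_{j=1}^{3}\bar\psi^j(\beta^k)G(\psi^j)\bigr]$) is a legitimate alternative to the paper's route, which instead invokes Myerson's reduced period polynomial $F_4(X)$, factors it over $\mathbb{Q}(\sqrt p)$, and uses $G_i=4\eta_i+1$ together with $\eta_0+\eta_2=\frac{\sqrt p-1}{2}$ to sort the roots. But your argument has a genuine gap at its pivotal step. Writing $G(\psi,\bar\chi_1)=a+b\sqrt{-1}$, you conclude from $a^2+b^2=|G(\psi,\bar\chi_1)|^2=p$ that ``$(a,b)$ is, up to sign and order, the pair $(s,t)$.'' This is a non sequitur: $a$ and $b$ are real parts of a quartic Gauss sum and are \emph{not} integers, so $a^2+b^2=p$ alone identifies nothing. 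Indeed your own intermediate formula $\{\eta_0,\eta_2\}=\{\frac{\sqrt p-1}{4}\pm\frac a2\}$ combined with $a=\pm s$ would give $\frac{\sqrt p-1}{4}\pm\frac s2$, which contradicts the statement you are proving; the correct values force $a^2=\frac{p\mp\sqrt p\,s}{2}$, an irrational number, so the closing ``substitution check'' ($2a=\sqrt2\sqrt{2s^2}$, etc.) cannot work. The missing ingredient is precisely the classical input that ties the periods to the representation $p=s^2+t^2$: either Gauss's evaluation of the quartic Jacobi sum, via $G(\psi)^2=G(\psi^2)J(\psi,\psi)=\sqrt p\,J(\psi,\psi)$ with $J(\psi,\psi)=A+B\sqrt{-1}$, $A^2+B^2=p$ and $A$ normalized by a congruence mod $4$ (whence $a^2-b^2=\sqrt p\,A$ and $a^2+b^2=p$ give the stated radicals), or, as the paper does, the explicit period polynomial of Myerson/Berndt--Evans--Williams. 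Without one of these, the dependence on $s$ is simply asserted, not derived.

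A secondary error: you attribute the $p\bmod 8$ dichotomy to $\psi(2)$ (whether $2$ is a fourth power). It actually comes from $\psi(-1)=(-1)^{(p-1)/4}$, i.e.\ whether $-1$ is a quartic residue: for $p\equiv1\bmod 8$ one has $G(\psi^3)=\overline{G(\psi)}$ and all $\eta_i$ are real, while for $p\equiv5\bmod 8$ one has $G(\psi^3)=-\overline{G(\psi)}$, $-1\in C_2$, and $\eta_0,\eta_2$ (resp.\ $\eta_1,\eta_3$) are complex conjugates --- which is why the radicands in case $(i)$ are negative. (Whether $2$ is a quartic residue is governed by $p=s^2+64b^2$, not by $p\bmod 8$.) Your caution about the unordered pairs and the normalization $s\equiv1\bmod 4$ is well placed, but the proof as proposed does not close without the Jacobi-sum (or period-polynomial) evaluation.
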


\begin{proof}

According to~\cite{myerson1981period}, the Gauss sums $G_i$ are given by
\begin{eqnarray*}
G_i=\sum_{x\in \mathbb{F}_{p}} \zeta_p^{\beta^i x^4},~~i=0,1,2,3,
\end{eqnarray*}
and they are roots of a polynomial $F_4(X)$, i.e.,
\begin{eqnarray*}
F_4(X)=\prod_{i_0}^3 (X-G_i),
\end{eqnarray*}
which is called reduced (or modified) period polynomial.
By Theorem 14 of~\cite{myerson1981period} (see also Theorem 10.10.6 of~\cite{berndt1998gauss}), we have
\begin{eqnarray*}
F_4(X)=\left\{\begin{array}{lll}
(X^2+3p)^2-4p(X-s)^2 ~~~~ && \mathrm{if} ~~p\equiv5\mod 8,\\
(X^2-p)^2-4p(X-s)^2 ~~~~  &&\mathrm{if} ~~p\equiv1\mod 8,
\end{array} \right.
\end{eqnarray*}
where $p=s^2+t^2$ with $s\equiv1\mod 4$.

In the following, we give the proof of case $p\equiv5\mod 8$ since that of case $p\equiv1\mod 8$ is similarly verified.

In the case of $p\equiv5\mod 8$, we have
\begin{eqnarray*}
F_4(X)= \left(X^2+3p-2\sqrt{p}(X-s)\right)\left(X^2+3p+2\sqrt{p}(X-s)\right).
\end{eqnarray*}

Note that $\eta_0+\eta_2 = \eta_0^{(2,p)}=\frac{1}{2}(\sqrt{p}-1)$
yields that $G_0+G_2=2\sqrt{p}$, since $G_i=4\eta_i+1$. Hence, we see that $G_0$, $G_2$ are roots of
\begin{eqnarray*}
X^2+3p-2\sqrt{p}(X-s)=0.
\end{eqnarray*}
Therefore, $G_1$, $G_3$ are roots of
\begin{eqnarray*}
X^2+3p+2\sqrt{p}(X-s)=0.
\end{eqnarray*}
It is straightforward that
\begin{eqnarray*}
&G_0+G_2=~~2\sqrt{p},~~& G_0G_2=3p+2\sqrt{p} s,\\
&G_1+G_2=-2\sqrt{p},~~& G_1G_3=3p-2\sqrt{p} s.
\end{eqnarray*}
Moreover, we can obtain that
\begin{eqnarray*}
&&\eta_0\eta_2= \frac{1}{16}(3p+1-2\sqrt{p}(1-s)),\\
&&\eta_1\eta_3= \frac{1}{16}(3p+1+2\sqrt{p}(1-s)),\\
&&\eta_0^2+\eta_2^2= \frac{1}{8}(1-p-2\sqrt{p}(1+s)),\\
&&\eta_1^2+\eta_3^2= \frac{1}{8}(1-p+2\sqrt{p}(1+s)).
\end{eqnarray*}
Consequently, we have
\begin{eqnarray*}
(\eta_0+\eta_2)^2 &= \frac{1}{4}(\sqrt{p}-1)^2,~~
(\eta_0-\eta_2)^2 &= \frac{1}{2}(-\sqrt{p}s-p),\\
(\eta_1+\eta_3)^2 &= \frac{1}{4}(\sqrt{p}+1)^2,~~
(\eta_1-\eta_3)^2 &= \frac{1}{2}(\sqrt{p}s-p).
\end{eqnarray*}

The desired conclusions follow from the facts that
$\eta_0+\eta_2 =\frac{1}{2}(\sqrt{p}-1)$ and
$\eta_0+\eta_1+\eta_2+\eta_3=-1$.
\hfill\space$\qed$
\end{proof}

 \subsection{The proof of Theorem~\ref{thcwe:CD1,-1}}\label{subsec:pf2}

Observe that $a=0$ gives the zero codeword and the contribution to the complete
weight enumerator is $w_0^{n}$, where $n=\frac{p-1}{2}p^{m-1}$. This value occurs only once. Hence, we assume that $a\in\mathbb{F}_{r}^*$ for the rest of the proof.

For $\rho\in \mathbb{F}_{p}^*$, we consider
\begin{eqnarray*}
   A=\sum_{x\in\mathbb{F}_{r}}\sum_{y\in\mathbb{F}_{p}}\zeta_p^{y^2\mathrm{Tr}(x)}
     \sum_{z\in\mathbb{F}_{p}}\zeta_p^{z\mathrm{Tr}(ax^2)-z\rho}.
 \end{eqnarray*}
Then, it is easy to see that
\begin{eqnarray}\label{eq:A}
  A = N_0(\rho)p^2+(N(\rho)-N_1(\rho))p\sqrt{p^*},
\end{eqnarray}
since
\begin{eqnarray*}
  ~~~~~~~~~~~\sum_{y\in\mathbb{F}_{p}}\zeta_p^{y^2\mathrm{Tr}(x)}
  =\left\{\begin{array}{lll}
    p     &&~~\mathrm{if}~~\mathrm{Tr}(x)=0,\\
    \sqrt{p^*}                        &&~~\mathrm{if}~~\mathrm{Tr}(x)\in Sq,\\
    -\sqrt{p^*}    &&~~\mathrm{if}~~\mathrm{Tr}(x)\in Nsq,
\end{array} \right.
 \end{eqnarray*}
and
\begin{eqnarray*}
  \sum_{z\in\mathbb{F}_{p}}\zeta_p^{z\mathrm{Tr}(ax^2)-z\rho}
  =\left\{\begin{array}{lll}
    p     &&~~\mathrm{if}~~\mathrm{Tr}(ax^2)=\rho,\\
    0                         &&~~\mathrm{if}~~\mathrm{Tr}(ax^2)\neq \rho.
\end{array} \right.
 \end{eqnarray*}

On the other hand, from Theorem 5.33 of~\cite{lidl1983finite} and Equation~\eqref{eq:ortho}, we get
\begin{eqnarray}\label{eq:A2}
  A &=&r+\sum_{y\in\mathbb{F}_{p}^*}\sum_{x\in\mathbb{F}_{r}}\zeta_p^{y^2\mathrm{Tr}(x)}
       +\sum_{z\in\mathbb{F}_{p}^*}\zeta_p^{-z\rho}
        \sum_{y\in\mathbb{F}_{p}}\sum_{x\in\mathbb{F}_{r}}\zeta_p^{\mathrm{Tr}(azx^2+y^2x)}\nonumber \\
    &=&r+\sum_{z\in\mathbb{F}_{p}^*}\zeta_p^{-z\rho}
        \sum_{y\in\mathbb{F}_{p}}\zeta_p^{\mathrm{Tr}(-\frac{y^4}{4az})}\eta(az)G(\eta)\nonumber  \\
    &=&r+\eta(a)G(\eta)\sum_{z\in\mathbb{F}_{p}^*}\zeta_p^{-z\rho}\eta(z)
        \sum_{y\in\mathbb{F}_{p}}\zeta_p^{-\frac{\mathrm{Tr}(a^{-1})}{4z}y^4}.
 \end{eqnarray}

In the following, we calculate the value $A$ of~\eqref{eq:A2} by distinguishing the cases of $\mathrm{Tr}(a^{-1})=0$ and $\mathrm{Tr}(a^{-1})\neq0$.

$\emph{Case 1}$: $\mathrm{Tr}(a^{-1})=0$.

In this case, we have
\begin{eqnarray*}
  A&=&\left\{\begin{array}{lll}
   r-p\eta(a)G(\eta)  && \mathrm{if}~~m~~\mathrm{even},\\
   r+p\eta(a)\bar{\eta}(-\rho)G(\eta)G(\bar{\eta}) && \mathrm{if}~~m~~\mathrm{odd},
\end{array} \right.
 \end{eqnarray*}
which leads to $N(\rho)=N_1(\rho)$ compared with Equation~\eqref{eq:A} and Lemma~\ref{lem:N0 rho}.
It follows from Lemma~\ref{lem:sum2} that
$N(\rho)=\frac{p-1}{2}p^{m-2}$.
This value occurs $p^{m-1}-1$ times.

$\emph{Case 2}$: $\mathrm{Tr}(a^{-1})\neq0$.

Recall that $p\equiv3 \mod 4$. Thus, $\mathrm{gcd}(4,p-1)=2$.
From Equation~\eqref{eq:A2} and Lemma~\ref{lm:expo sum k}, we have
\begin{eqnarray*}
  A&=&r+\eta(a)G(\eta)\sum_{z\in\mathbb{F}_{p}^*}\zeta_p^{-z\rho}\eta(z)
        \bar{\eta}\left(-\frac{\mathrm{Tr}(a^{-1})}{4z}\right)G(\bar{\eta})\\
   &=&r+\eta(a)G(\eta)\bar{\eta}(-\mathrm{Tr}(a^{-1}))\sum_{z\in\mathbb{F}_{p}^*}\zeta_p^{-z\rho}\eta(z)
        \bar{\eta}(z)G(\bar{\eta})\\
   &=&\left\{\begin{array}{lll}
   r+\eta(a)\bar{\eta}(\rho\mathrm{Tr}(a^{-1}))G(\eta)G(\bar{\eta})^2
                      &&~~ \mathrm{if}~~m~~\mathrm{even},\\
   r-\eta(a)\bar{\eta}(-\mathrm{Tr}(a^{-1}))G(\eta)G(\bar{\eta})
                      &&~~ \mathrm{if}~~m~~\mathrm{odd},
\end{array} \right.
 \end{eqnarray*}
which also leads to $N(\rho)=N_1(\rho)$ from Equation~\eqref{eq:A} and Lemma~\ref{lem:N0 rho}.
It then follows from Lemma~\ref{lem:sum2} that
\begin{eqnarray*}
&&N(\rho)=\left\{\begin{array}{lll}
   \frac{p-1}{2}p^{m-2}             && ~~\mathrm{if}~~\bar{\eta}(\rho\mathrm{Tr}(a^{-1}))=1\\
   \frac{p-1}{2}p^{m-2}+\eta(a)(-1)^{\frac{m}{2}}p^{\frac{m-2}{2}}
     && ~~\mathrm{if}~~\bar{\eta}(\rho\mathrm{Tr}(a^{-1}))=-1
\end{array} \right.
 \end{eqnarray*}
for even $m$, and otherwise,
\begin{eqnarray*}
&&N(\rho)= \frac{p-1}{2}p^{m-2}+\frac{1}{2}\eta(a)(-1)^{\frac{m-1}{2}}p^{\frac{m-3}{2}}
(p\bar{\eta}(\rho)+\bar{\eta}(\mathrm{Tr}(a^{-1}))).
\end{eqnarray*}

Note that $N(0)=\frac{p-1}{2}p^{m-1}-\sum_{\rho\in \mathbb{F}_{p}^* }N(\rho)$.
The desired conclusion then follows from Lemmas~\ref{lem:nij1} and~\ref{lem:nij2}.

This completes the proof of Theorem~\ref{thcwe:CD1,-1}.

 \subsection{The proof of Theorem~\ref{thcwe:CD1,-1 p=1mod4}}\label{subsec:pf3}

By the proof of Theorem~\ref{thcwe:CD1,-1}, we only need to consider the case $\mathrm{Tr}(a^{-1})\neq0$ with $a\in\mathbb{F}_{r}^*$, since the cases of $a=0$ and $\mathrm{Tr}(a^{-1})=0$ have already been determined. For this purpose, we write Equation~\eqref{eq:A2} as
\begin{eqnarray}\label{eq:A4}
  A =r+\eta(a)G(\eta)B,
 \end{eqnarray}
 where
\begin{eqnarray}\label{def:B}
B=\sum_{z\in\mathbb{F}_{p}^*}\zeta_p^{-z\rho}\eta(z)
        \sum_{y\in\mathbb{F}_{p}}\zeta_p^{-\frac{\mathrm{Tr}(a^{-1})}{4z}y^4}.
\end{eqnarray}

Let notations be as aforementioned and $p\equiv1\mod 4$. When $\mathrm{Tr}(a^{-1})\neq0$, the value of $B$
can be determined by
\begin{eqnarray}\label{valueofB}
B &=& \sum_{z\in\mathbb{F}_{p}^*}\zeta_p^{-z\rho}\bar{\eta}(z)
       \left(4\eta_{-\frac{\mathrm{Tr}(a^{-1})}{4z}}+1\right) \nonumber\\
  &=& \left(\sum_{z\in C_0}+\sum_{z\in C_2}-\sum_{z\in C_1}-\sum_{z\in C_3} \right) 4\zeta_p^{-z\rho}\eta_{-\frac{\mathrm{Tr}(a^{-1})}{4z}} + \bar{\eta}(-\rho)G(\bar{\eta}) \nonumber\\
  &=& \left(\sum_{z\in C_0}+\sum_{z\in C_2}-\sum_{z\in C_1}-\sum_{z\in C_3} \right) 4\zeta_p^{-z\rho}\eta_{-\frac{\mathrm{Tr}(a^{-1})}{4z}} + \bar{\eta}(\rho)\sqrt{p},
\end{eqnarray}
since $m$ is odd. By Equations~\eqref{eq:A},~\eqref{eq:A4}, and Lemma~\ref{lem:sum2}, we have
\begin{eqnarray}\label{eqN1andN-1}
\left\{\begin{array}{lll}
N(\rho)+N_1(\rho)&=& p^{m-1}-p^{m-2}+\eta(a)p^{\frac{m-3}{2}}\left(\bar{\eta}(\rho) p+\bar{\eta}(\mathrm{Tr}(a^{-1}))\right),\\
N(\rho)-N_1(\rho)&=& \eta(a)\left(\bar{\eta}(\mathrm{Tr}(a^{-1}))p^{\frac{m-2}{2}}+p^{\frac{m-3}{2}}B\right).
\end{array} \right.
\end{eqnarray}

Now, we assume that $p\equiv5\mod 8$.

Clearly, $-1$ and $4$ are both in $C_2$. In the following, the value $B$ of \eqref{valueofB} will be computed according to the choices of $\mathrm{Tr}(a^{-1})$ and $\rho$.

$\emph{Case 1:}$ $\mathrm{Tr}(a^{-1})\in C_0$, $\rho\in C_0$.

In this case, by Lemma~\ref{value of gauss period} and Equation~\eqref{valueofB}, we obtain
\begin{eqnarray*}
B = 4(2\eta_0\eta_2-\eta_1^2-\eta_3^2)+\sqrt{p}= 2p-\sqrt{p}.
\end{eqnarray*}

It follows from Equation~\eqref{eqN1andN-1} that
\begin{eqnarray*}
&N(\rho)&= \frac{p-1}{2}p^{m-2}+\frac{1}{2}\eta(a)
(3 p^{\frac{m-1}{2}}+p^{\frac{m-3}{2}}),\\
&N_1(\rho)&= \frac{p-1}{2}p^{m-2}-\frac{1}{2}\eta(a)
(~ p^{\frac{m-1}{2}}-p^{\frac{m-3}{2}}).
\end{eqnarray*}

$\emph{Case 2:}$ $\mathrm{Tr}(a^{-1})\in C_0$, $\rho\in C_1$.

In this case, we deduce that
\begin{eqnarray*}
B =4(\eta_3\eta_0+\eta_1\eta_2-\eta_0\eta_3-\eta_2\eta_1)-\sqrt{p}=-\sqrt{p},
 \end{eqnarray*}
which indicates that
\begin{eqnarray*}
N(\rho)=N_1(\rho)= \frac{p-1}{2}p^{m-2}-\frac{1}{2}\eta(a)
(~ p^{\frac{m-1}{2}}-p^{\frac{m-3}{2}}).
\end{eqnarray*}

$\emph{Case 3:}$ $\mathrm{Tr}(a^{-1})\in C_0$, $\rho\in C_2$.

In this case, we have
\begin{eqnarray*}
B =4(\eta_0^2+\eta_2^2-2\eta_1\eta_3)+\sqrt{p}=-2p-\sqrt{p},
 \end{eqnarray*}
which gives that
\begin{eqnarray*}
&N(\rho)&= \frac{p-1}{2}p^{m-2}-\frac{1}{2}\eta(a)
(~ p^{\frac{m-1}{2}}-p^{\frac{m-3}{2}}),\\
&N_1(\rho)&= \frac{p-1}{2}p^{m-2}+\frac{1}{2}\eta(a)
(3 p^{\frac{m-1}{2}}+p^{\frac{m-3}{2}}).
\end{eqnarray*}

$\emph{Case 4:}$ $\mathrm{Tr}(a^{-1})\in C_0$, $\rho\in C_3$.

In this case, we obtain
\begin{eqnarray*}
B =4(\eta_1\eta_0+\eta_3\eta_2-\eta_2\eta_3-\eta_0\eta_1)-\sqrt{p}=-\sqrt{p}.
 \end{eqnarray*}
As a consequence, we get
\begin{eqnarray*}
N(\rho)=N_1(\rho)= \frac{p-1}{2}p^{m-2}-\frac{1}{2}\eta(a)
(~ p^{\frac{m-1}{2}}-p^{\frac{m-3}{2}}).
\end{eqnarray*}

Moreover, for $\mathrm{Tr}(a^{-1})\in C_0$, the number of $a$ satisfying
$\eta(a)=1$ is
\begin{eqnarray*}
\#\{a\in \mathbb{F}_{r}^*:\eta(a)=1,~ \mathrm{Tr}(a^{-1})\in C_0 \}
=\frac{1}{2}n_{1,1}
=\frac{p-1}{8}(p^{m-1}+ p^{\frac{m-1}{2}}),
\end{eqnarray*} according to Lemma~\ref{lem:nij1}.
And similarly, the number of $a$ satisfying
$\eta(a)=-1$ is
\begin{eqnarray*}
&&\#\{a\in \mathbb{F}_{r}^*:\eta(a)=-1,~ \mathrm{Tr}(a^{-1})\in C_0 \}
=\frac{1}{2}n_{-1,1}
=\frac{p-1}{8}(p^{m-1}- p^{\frac{m-1}{2}}),
\end{eqnarray*}
according to Lemma~\ref{lem:nij2}.

There are sixteen cases all together to be considered. Other cases can be similarly calculated, which are omitted here.

Note that the case of $p\equiv1\mod 8$ can be analyzed
in an analogous fashion. The proof of Theorem~\ref{thcwe:CD1,-1 p=1mod4} is finished.

 \subsection{The proof of Theorem~\ref{thcwe:CD1,-1 p=1mod4 2}}\label{subsec:pf4}

This proof is similar to that of Theorem~\ref{thcwe:CD1,-1 p=1mod4} by observing that
\begin{eqnarray*}\label{valueofB2}
B &=&  \left(\sum_{z\in C_0}+\sum_{z\in C_1}+\sum_{z\in C_2}+\sum_{z\in C_3} \right) 4\zeta_p^{-z\rho}\eta_{-\frac{\mathrm{Tr}(a^{-1})}{4z}} -1,
\end{eqnarray*}
from Equation~\eqref{def:B}, since $m$ is even.
Thus, we omit the details here.

\section{Concluding remarks}\label{sec:conclusion}

Inspired by the original ideas of~\cite{ding2015twothree,tang2015linear}, we constructed a class of three-weight linear codes. By employing some mathematical tools, we presented explicitly their complete weight enumerators and weight enumerators. Their punctured codes contain some almost optimal codes. By Theorem~\ref{wt:CD1,-1}, it is easy to check that
\begin{eqnarray*}
\frac{w_{min}}{w_{max}}>\frac{p-1}{p},
\end{eqnarray*}
for $m\geq 4$. Here $w_{min}$ and $w_{max}$ denote the minimum and maximum nonzero weights in $C_{D}$, respectively. Therefore, the code $C_{D}$ can be used for secret sharing schemes with interesting access structures. We also mention that the complete weight enumerators, presented in Theorems~\ref{thcwe:CD1,-1},~\ref{thcwe:CD1,-1 p=1mod4} and~\ref{thcwe:CD1,-1 p=1mod4 2}, can be applied to construct systematic authentication codes. Furthermore, if $r$ is large enough, these authentication codes are asymptotically optimal. See~\cite{Ding2005auth,ding2015twothree,LiYang2015cwe}.

 Note that $\mathrm{gcd}(4,p-1)=4$ if $p\equiv1\mod 4$. This implies that we can prove Theorems~\ref{thcwe:CD1,-1 p=1mod4} and~\ref{thcwe:CD1,-1 p=1mod4 2} with a similar method used in
 Subsection~\ref{subsec:pf2}. One can see that it works well though it is indeed very complicated. However, we gave a simpler proof by
 employing Gauss periods to determine the complete weight enumerator of $C_{D}$ for the case of $p\equiv1\mod 4$.

To conclude this paper, we remark that the codes proposed in this paper can be extended to a more general case, that is, for an integer $t\geq2$, define
\begin{eqnarray*}\label{def:generalCD1-1}
    C_{D'}=\left\{\left(\mathrm{Tr}(a_1x_1^2+\cdots+a_tx_t^2)\right)_{(x_1,\cdots,x_t)\in D}: a_1,\cdots,a_t\in \mathbb{F}_{r}\right\},
\end{eqnarray*}
where
\begin{eqnarray*}
     D'   =\left\{(x_1,\cdots,x_t)\in \mathbb{F}_{r}^t:\mathrm{Tr}(x_1+\cdots+x_t)\in Sq\right\}.
\end{eqnarray*}
For this kind of linear codes, it will be an interesting work to settle their complete weight enumerators.

\begin{acknowledgements}
The work of Zheng-An Yao is partially supported by the NSFC (Grant No.11271381), the NSFC (Grant No.11431015)
and China 973 Program (Grant No. 2011CB808000).
The work of Chang-An Zhao is partially supported by the NSFC (Grant No. 61472457). This work is also partially supported by Guangdong Natural Science
Foundation (Grant No. 2014A030313161).
\end{acknowledgements}


\end{document}